\renewcommand{\le}{\leqslant}
\renewcommand{\ge}{\geqslant}
\newcommand{\cC}{\mathcal{C}}
\newcommand{\cE}{\mathcal{E}}
\newcommand{\cF}{\mathcal{F}}
\newcommand{\cS}{\mathcal{S}}
\newcommand{\eqdef}{\triangleq}
\DeclareMathOperator{\supp}{supp}
\newtheoremstyle{thmstyleone}
{18pt plus2pt minus1pt}
{18pt plus2pt minus1pt}
{\small\itshape}
{0pt}
{\small\bfseries}
{}
{.5em}
{\thmname{#1}\thmnumber{\@ifnotempty{#1}{ }\@upn{#2}}%
  \thmnote{ {\the\thm@notefont(#3)}}}
\newtheoremstyle{thmstyletwo}
{18pt plus2pt minus1pt}
{18pt plus2pt minus1pt}
{\small\normalfont}
{0pt}
{\small\itshape}
{}
{.5em}
{\thmname{#1}\thmnumber{\@ifnotempty{#1}{ }{#2}}%
  \thmnote{ {\the\thm@notefont(#3)}}}
\newtheoremstyle{thmstylethree}
{18pt plus2pt minus1pt}
{18pt plus2pt minus1pt}
{\small\normalfont}
{0pt}
{\small\bfseries}
{}
{.5em}
{\thmname{#1}\thmnumber{\@ifnotempty{#1}{ }\@upn{#2}}%
  \thmnote{ {\the\thm@notefont(#3)}}}
\theoremstyle{thmstyleone}
\newtheorem{theorem}{Theorem}
\newtheorem{proposition}[theorem]{Proposition}
\newtheorem{corollary}[theorem]{Corollary}
\newtheorem{lemma}[theorem]{Lemma}
\newtheorem{claim}[theorem]{Claim}
\newtheorem{construction}{Construction}
\newtheorem{conjecture}[theorem]{Conjecture}
\theoremstyle{thmstyletwo}
\newtheorem{example}{Example}
\newtheorem{remark}{Remark}
\theoremstyle{thmstylethree}
\newtheorem{definition}{Definition}
\newif\ifshowproofs
\begin{document}

\title[Multiset Deletion-Correcting Codes]{Multiset Deletion Codes: Cyclic Constructions, Bounds, and Exact Results}

\author[1]{\fnm{Avraham} \sur{Kreindel}}
\email{avrahamkreindel@gmail.com}

\author[2]{\fnm{Isaac} \sur{Barouch Essayag}}
\email{isaac.es@migal.org.il}

\author[3]{\fnm{Aryeh~Lev} \sur{Zabokritskiy (Yohananov)}}
\email{yuhanalev@telhai.ac.il}

\affil[1]{%
  \orgdiv{Department of Computer Science},
  \orgname{Reichman University},
  \orgaddress{\city{Herzliya}, \country{Israel}}
}

\affil[2]{%
  \orgdiv{Research Assistant},
  \orgname{MIGAL -- Galilee Research Institute/Tel-Hai University of Kiryat Shmona and the Galilee},
  \orgaddress{\city{Kiryat Shmona}, \country{Israel}}
}

\affil[3]{%
  \orgdiv{Department of Computer Science},
  \orgname{MIGAL -- Galilee Research Institute/Tel-Hai University of Kiryat Shmona and the Galilee},
  \orgaddress{\city{Kiryat Shmona}, \country{Israel}}
}

\abstract{We study deletion-correcting codes in the space of length-$n$ multisets over a $q$-ary alphabet. We present an explicit cyclic Sidon-type construction for arbitrary alphabet size $q$ and deletion radius $t$, defined by a single congruence modulo $t(t+1)^{q-2}+1$. The construction has redundancy at most $\log_q(t(t+1)^{q-2}+1)$ and admits linear-time online decoding for fixed $q$ and $t$ after finite preprocessing. We prove that its syndrome classes are asymptotically balanced and compare several general upper bounds. For a single deletion, we show that the natural sum-modulo construction is asymptotically optimal for every fixed $q$. We also obtain exact results for $q=3$ and $q=4$, including uniqueness results for optimal codes in the relevant parameter ranges, and formulate conjectures for prime alphabets.
}

\keywords{Deletion channels, multiset codes, Sidon sets, Bose--Chowla construction}

\pacs[MSC Classification]{94B25, 94B05, 94B60}

\maketitle
\raggedbottom

\section{Introduction}

Communication models in which the order of transmitted symbols is unreliable
or irrelevant arise naturally in a variety of modern systems. Unlike classical
sequence-based channels, these models preserve only the \emph{multiset} of
symbols, discarding positional information entirely. This abstraction was
introduced and systematically developed in a series of works by
Kova\v{c}evi\'c and collaborators
\cite{KovacevicVukobratovic2013,KovacevicTan2018,KovacevicDuplication2019},
who showed that many impairments of permutation channels may be expressed as
operations on symbol multiplicities inside a discrete simplex. In such
settings, errors do not alter symbol positions but instead modify their counts,
leading to a coding-theoretic framework fundamentally different from classical
Hamming or insertion--deletion models.

A key motivation emphasized in~\cite{KovacevicTan2018} is the connection to
\emph{permutation channels}, where the transmitted sequence may undergo
arbitrary reordering before reception. In several physical and biochemical
systems, such as molecular communication, chemical reaction networks, and
DNA-based storage architectures, the receiver observes an unordered multiset of
tokens rather than a structured sequence. Deletions, duplications, and molecular
losses then manifest as perturbations of multiplicities, making the multiset
model a natural abstraction. Similar effects appear in
\(\ell_\infty\)-limited permutation channels~\cite{LangbergSchwartzYaakobi2017}
and in coding over nonlinear combinatorial structures such as
trees~\cite{CodesOverTrees2021}, where positional indexing is degraded or
absent.

Multiset representations also arise naturally in practical data-management
tasks, most notably in large-scale inventory auditing. Supermarkets and
warehouses maintain large numbers of identical items sharing the same barcode.
A physical audit observes only the \emph{multiset} of remaining items, not an
ordered list. Any mismatch between the expected and observed inventories,
arising from unrecorded removals, operational noise, or scanning errors,
appears as an unknown \emph{deletion} from the true multiplicity vector. Since
ordering is irrelevant, the reconstruction problem becomes that of recovering
the correct multiset.

Classical deletion-correcting codes, beginning with Levenshtein's seminal
work~\cite{Levenshtein1965,Levenshtein1966} and the
Varshamov--Tenengolts codes~\cite{VarshamovTenengolts1965}, rely crucially on
positional information. The decoder must identify where a deletion occurred and
then restore the missing symbol value. This paradigm persists in more advanced
sequence-based deletion models, including two-dimensional deletion
codes~\cite{Chee2DDeletion2021}. In contrast, the multiset framework removes
positional information entirely. A deletion changes only the multiplicity
vector, so the decoder needs to determine which symbols were deleted rather
than where they were deleted. This distinction leads to a different
combinatorial geometry, and many classical intuitions, such as homogeneous
sphere packing, no longer apply directly.

We consider the space
\[
   \cS_{n,q}
   =
   \left\{
      (x_0,\dots,x_{q-1})\in\mathbb Z_{\ge0}^q:
      \sum_{i=0}^{q-1}x_i=n
   \right\},
\]
the set of all multisets of cardinality \(n\) over a \(q\)-symbol alphabet.
For two multisets \(S,T\in\cS_{n,q}\), we use the deletion distance
\[
   d(S,T)=n-|S\cap T|,
\]
where the intersection is taken with multiplicities. A code
\(\cC\subseteq\cS_{n,q}\) corrects \(t\) deletions if no two distinct codewords
can produce the same multiset after deleting at most \(t\) symbols. Equivalently,
the minimum deletion distance of the code is at least \(t+1\). We denote the
maximum size of such a code by \(S_q(n,t)\).
A code is called \emph{optimal} for the parameters \((n,q,t)\) if its size is
equal to \(S_q(n,t)\). Thus, an exact formula for \(S_q(n,t)\) gives the optimal
code size for the corresponding parameters.

We measure redundancy in the usual way: a code
\(\cC\subseteq\cS_{n,q}\) of size \(M\) has redundancy
\[
   \log_q|\cS_{n,q}|-\log_q M.
\]

An important structural concept in the theory of multiset codes is that of
linear multiset codes, introduced by Kova\v{c}evi\'c and
Tan~\cite{KovacevicTan2018}. These are obtained by intersecting the discrete
simplex with a translated lattice or, equivalently, by imposing a group-syndrome
constraint on the multiplicity vector. This framework is closely connected to
Sidon-type sets in additive combinatorics, including classical constructions
of Bose and Chowla~\cite{BoseChowla1960}; see also the survey
\cite{Cilleruelo2010}. In this setting, suitable Sidon-type sets provide
syndrome constraints whose changes uniquely identify small error patterns.

The general Sidon-to-code principle is already known. The question addressed
here is more specific: can one obtain simple, explicit deletion-oriented
syndrome maps that work uniformly for all alphabet sizes and deletion radii,
and that are efficient in the fixed-alphabet regime? This is the motivation for
our cyclic construction. Its main feature is that it exploits the one-sided
nature of deletion errors, rather than using a general finite-field Sidon set
designed for symmetric or signed errors.

The single-deletion case already reveals a subtle feature of the multiset
model. The natural sum-modulo construction corrects one deletion by recording
the deleted symbol through a single additive syndrome. In classical sequence
settings, such elementary one-syndrome constructions are often optimal for
single-deletion correction in their natural regimes. In the multiset setting,
however, finite-length optimality is more delicate. We prove that the
sum-modulo classes are explicitly balanced and asymptotically optimal for every
fixed alphabet size, but also show that they need not be exactly optimal at
small blocklengths. The exact ternary and quaternary values follow by
translating the independence-number formulas of Geramita, Gregory, and Roberts
for triangular and tetrahedral grid graphs. Machacek's later uniqueness results
then identify infinite families in which the optimal code itself is unique:
for \(q=3\), this holds when \(3\mid n\) and \(n\ne6\), and for \(q=4\) it
holds for every even \(n\). For \(q=3\), the sum-modulo construction fails to
be optimal only at \(n=2,4\). For \(q=4\), the choice of additive group is
essential: the construction over \((\mathbb F_4,+)\) is optimal, while the
cyclic sum modulo \(4\) is strictly smaller for even blocklengths. These
observations lead naturally to conjectures for \(q=5\) and, more generally, for
prime alphabets.

\medskip
\noindent\textbf{Our contributions.}
The results of this paper are organized according to the present structure of
the manuscript.

\begin{itemize}
   \item We present an explicit cyclic Sidon-type construction for
   \(q\)-ary multiset \(t\)-deletion-correcting codes. The construction assigns
   the weights
   \[
      1,\ t+1,\ (t+1)^2,\ \dots,\ (t+1)^{q-2},\ 0
      \pmod{t(t+1)^{q-2}+1}
   \]
   to the alphabet symbols and imposes a single congruence constraint. It works
   for every pair \((q,t)\), with no prime-power or characteristic
   assumptions. Writing
   \[
      m=t(t+1)^{q-2}+1,
   \]
   the averaging argument gives a code of size at least
   \[
      \frac{1}{m}\binom{n+q-1}{q-1},
   \]
   and hence redundancy at most \(\log_q m\). For fixed \(q\) and \(t\), online
   membership testing and decoding are linear in \(n\) after finite
   preprocessing.

   We further prove an asymptotic balance theorem for the syndrome classes.
   Namely, for every fixed residue \(a\) modulo \(m\),
   \[
      |\cC_q(a)|
      =
      \frac{1}{m}\binom{n+q-1}{q-1}
      +
      O_{q,t}\!\left(n^{\left\lceil\frac{q-1}{2}\right\rceil-1}\right).
   \]
   Consequently, every fixed residue class has redundancy
   \[
      \log_q m
      +
      O_{q,t}\!\left(
         n^{-q+\left\lceil\frac{q-1}{2}\right\rceil}
      \right).
   \]
   Thus, asymptotically, choosing any convenient residue is as good as searching
   for a largest syndrome class.

   \item We compare the cyclic construction with related Sidon-type
   constructions, including Varshamov power-sum constructions,
   Bose--Chowla type finite-field constructions, and the construction of
   Xiao--Zhou for Lee-metric lattice packings. The comparison shows that these
   constructions are complementary. Finite-field Sidon constructions are
   stronger in the large-alphabet, fixed-radius regime, whereas the cyclic
   deletion-specific construction is universal and has logarithmic redundancy
   in \(t\) when \(q\) is fixed.

   \item We revisit general upper bounds for multiset deletion codes. We recall
   an explicit upper bound of Kova\v{c}evi\'c and Tan and compare it with a
   corrected sphere-packing bound and a projection bound. The comparison shows
   that the three bounds are not uniformly ordered. Among the three bounds
   compared here, the Kova\v{c}evi\'c--Tan bound is strongest in the standard
   fixed-\(q,t\), \(n\to\infty\) regime, but in large-alphabet and
   high-deletion regimes the projection or sphere-packing bounds may be
   stronger.

   \item We also analyze the single-deletion case in detail. The natural
   sum-modulo construction partitions \(\cS_{n,q}\) into \(q\) additive
   syndrome classes and corrects one deletion by identifying the deleted
   symbol. We prove an exact roots-of-unity formula for the sizes of these
   classes. In particular, for prime \(q\), the largest class has size
   \[
      \left\lceil
      \frac1q\binom{n+q-1}{q-1}
      \right\rceil.
   \]
   Combining this exact balance with the Kova\v{c}evi\'c--Tan upper bound for
   \(t=1\) gives
   \[
      S_q(n,1)
      \le
      \frac1q\binom{n+q-1}{q-1}
      +
      O_q(n^{q-2}),
   \]
   showing that the sum-modulo construction is asymptotically optimal for every
   fixed alphabet size \(q\). This improves the ordinary projection bound by a
   factor \(q\) in the leading term for \(t=1\).

   \item We give a coding-theoretic translation of the exact
   independence-number formulas of Geramita, Gregory, and Roberts for
   triangular and tetrahedral grid graphs, obtaining exact values for
   \(S_3(n,1)\) and \(S_4(n,1)\). We then apply Machacek's uniqueness theorems.
   For a fixed labeled alphabet, the zero-syndrome ternary sum-modulo code is
   the unique optimal code whenever \(3\mid n\) and \(n\ne6\), and the
   zero-syndrome construction over \((\mathbb F_4,+)\) is the unique optimal
   quaternary code whenever \(n\) is even. These results show that
   one-syndrome constructions can be optimal, but not automatically. For
   \(q=3\), the sum-modulo construction is optimal except at the exceptional
   blocklengths \(n=2,4\). For \(q=4\), the optimal additive construction is
   over \((\mathbb F_4,+)\), and it is strictly larger than the cyclic sum
   modulo \(4\) for even blocklengths.

   \item Motivated by these exact small-alphabet results, we formulate
   conjectures for the next cases. For \(q=5\), exact computations for
   \(n\le8\) suggest that the prime-field sum-modulo construction is optimal
   except at \(n=2,4\). More generally, we conjecture that for prime alphabets
   the sum-modulo construction, equivalently the \(t=1\) finite-field
   Varshamov power-sum construction over \(\mathbb F_q\), is optimal for all
   sufficiently large blocklengths.

   \item We analyze extremal deletion regimes \(t=n-k\), where the
   received multiset has fixed small size \(k\). For \(t=n-1\), we obtain the
   exact value \(S_q(n,n-1)=q\). For \(t=n-2\), we reduce the problem to an
   ordinary intersection problem for squarefree codewords and obtain the exact
   value \(S_q(n,n-2)=q\) when \(n\ge q+1\). For \(t=n-3\), we give a
   certificate bound \(S_q(n,n-3)\le q+\binom q2\) when \(n\ge q+2\), and a
   Reiman-type incidence bound in the complementary regime.

\end{itemize}

The emphasis of the paper is therefore twofold. First, we give a universal
cyclic deletion-specific construction for arbitrary \(q\) and \(t\), together
with a sharp balance analysis of its syndrome classes. Second, we use the
single-deletion case as a testbed for finite-length optimality, showing that
natural one-syndrome constructions are asymptotically optimal but may fail to
be exactly optimal at small blocklengths or under the wrong choice of additive
group.

\medskip
\noindent\textbf{Organization.}
Section~\ref{sec:pre} introduces notation, the multiset deletion metric, and
the basic equivalence between deletion correction and minimum distance.
Section~\ref{sec:cyclic} presents the cyclic Sidon-type construction, proves
its correctness, establishes asymptotic balance of its syndrome classes,
analyzes its decoding complexity, and compares it with related Sidon-type
constructions. Section~\ref{sec:bounds} develops and compares upper bounds,
including the Kova\v{c}evi\'c--Tan bound, sphere-packing, projection, and the
extremal regimes \(t=n-1\), \(t=n-2\), and \(t=n-3\).
Section~\ref{sec:elementary-benchmarks} studies the single-deletion case,
including the sum-modulo construction, its exact balance, an asymptotic upper
bound, exact optimal results for \(q=2,3,4\), and conjectures for \(q=5\) and
prime alphabets. Section~\ref{sec:conc} concludes with open problems and future
directions.

\section{Preliminaries}
\label{sec:pre}

In this section we introduce the basic definitions and notation used throughout
the paper. Our framework follows the multiset model of
Kova\v{c}evi\'c and Tan~\cite{KovacevicTan2018}, but we adopt a streamlined
notation tailored to the constructions and bounds proved in this paper.
We denote the alphabet by
\[
   \Sigma = \{0,1,\dots,q-1\}, \qquad |\Sigma| = q.
\]
Let \(\cS_{n,q}\) denote the set of all multisets of cardinality \(n\) over
\(\Sigma\). By the classical stars-and-bars argument,
\[
   |\cS_{n,q}| = \binom{n+q-1}{q-1}.
\]
For a multiset \(S\in\cS_{n,q}\) we write \(S(x)\) for the multiplicity of
\(x\in\Sigma\) in \(S\).
The support of \(S\) is
\[
   \supp(S)=\{x\in\Sigma:S(x)>0\}.
\]
Thus \(|\supp(S)|\) is the number of distinct symbols appearing in \(S\).

\begin{definition}[Multiset operations]
\label{def:multiset-ops}
Let \(S,T\) be multisets over \(\Sigma\).
\begin{itemize}
  \item The \emph{intersection} \(S\cap T\) is the multiset whose multiplicities are
        \[
           (S\cap T)(x) = \min\{S(x),T(x)\}, \qquad x\in\Sigma.
        \]

  \item The \emph{difference} \(S\setminus T\) is the multiset defined by
        \[
           (S\setminus T)(x) = \max\{S(x)-T(x),0\}, \qquad x\in\Sigma.
        \]

  \item The \emph{(disjoint) multiset union} \(S \uplus T\) is the multiset whose
        multiplicities add:
        \[
           (S\uplus T)(x) = S(x)+T(x), \qquad x\in\Sigma.
        \]
\end{itemize}
\end{definition}

\begin{example}
Let
\[
   A = \{0,1,1,2,2,2\},\qquad B = \{1,2,2\}.
\]
Then
\[
   A\cap B = \{1,2,2\},\qquad
   A\setminus B = \{0,1,2\},\qquad
   A\uplus B = \{0,1,1,1,2,2,2,2,2\}.
\]
\end{example}

A \emph{codeword} is a multiset \(S\in\cS_{n,q}\) of cardinality~\(n\), and a
\emph{code} is a subset \(\cC\subseteq \cS_{n,q}\). A \emph{deletion} removes
one element from \(S\), reducing the multiset size to \(n-1\). We measure
similarity of two codewords by the \emph{deletion distance}
\[
   d(S,T) = n - |S\cap T|.
\]

\begin{definition}[Multiset deletion code]
\label{def:multiset-code}
A code \(\cC\subseteq\cS_{n,q}\) of size \(M\) that can correct any pattern of up to \(t\)
deletions is called an \emph{\(S_q[n,M,t]\) multiset \(t\)-deletion correcting
code}. For \(q=2\) we write \(S[n,M,t]\). We denote by \(S_q(n,t)\) the
maximal size of such a code.
\end{definition}

A code \(\cC\subseteq\cS_{n,q}\) is called \emph{optimal} for the parameters
\((n,q,t)\) if
\[
   |\cC|=S_q(n,t).
\]
Equivalently, an optimal code attains the largest possible cardinality among
all multiset \(t\)-deletion-correcting codes with the same parameters. We call
an optimal code \emph{unique} if it is the only subset of \(\cS_{n,q}\) with
these parameters and cardinality \(S_q(n,t)\). Unless stated otherwise,
uniqueness is meant for the fixed labeled alphabet, rather than merely up to a
graph isomorphism or an alphabet relabeling.

The \emph{redundancy} of \(\cC\) is defined as
\[
   R(\cC) = \log_q |\cS_{n,q}| - \log_q M.
\]

Following~\cite{KovacevicTan2018}, we represent each multiset
\(S\in\cS_{n,q}\) by its multiplicity vector
\[
   \mathbf{x}_S = (x_0,\dots,x_{q-1}) \in \mathbb Z_{\ge0}^q,
   \qquad x_i = S(i),
   \qquad \sum_{i=0}^{q-1} x_i = n.
\]
In this representation, the deletion distance is one half of the ordinary
\(\ell_1\)-distance. Namely, if
\[
   \mathbf{x}_S=(x_0,\dots,x_{q-1}),
   \qquad
   \mathbf{x}_T=(y_0,\dots,y_{q-1}),
\]
then
\[
   d(S,T)
   =
   \frac12\left\|\mathbf{x}_S-\mathbf{x}_T\right\|_1
   =
   \frac12 \sum_{i=0}^{q-1}|x_i-y_i|.
\]

\begin{proposition}[Equivalence with the \texorpdfstring{\(\ell_1\)}{l1} metric]
\label{prop:metric-equivalence}
For any multisets \(S,T\in\cS_{n,q}\),
\[
   \frac12\left\|\mathbf{x}_S-\mathbf{x}_T\right\|_1
   =
   n - |S\cap T|
   =
   d(S,T).
\]
\end{proposition}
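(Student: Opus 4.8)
The plan is to reduce everything to a single pointwise algebraic identity and then sum over the alphabet. First I would recall that the second equality, $n - \abs{S\cap T} = d(S,T)$, is nothing but the definition of the deletion distance given just above the proposition, so no work is needed there; the entire content is the first equality, $d_1(\mathbf{x}_S,\mathbf{y}_T) = n - \abs{S\cap T}$.

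For that, I would start from the observation that $\abs{S\cap T} = \sum_{i=0}^{q-1} (S\cap T)(i) = \sum_{i=0}^{q-1}\min\{x_i,y_i\}$ by Definition~\ref{def:multiset-ops}, where $x_i = S(i)$ and $y_i = T(i)$. The key step is the elementary identity
\[
   \min\{x_i,y_i\} = \tfrac12\bigl(x_i + y_i - \abs{x_i-y_i}\bigr),
\]
valid for any two real numbers (a one-line case check on whether $x_i \ge y_i$ or $x_i \le y_i$). Summing this identity over $i = 0,\dots,q-1$ and using $\sum_i x_i = \sum_i y_i = n$ gives
\[
   \abs{S\cap T} = \tfrac12\Bigl(\textstyle\sum_i x_i + \sum_i y_i - \sum_i\abs{x_i-y_i}\Bigr)
   = \tfrac12\bigl(2n - 2\,d_1(\mathbf{x}_S,\mathbf{y}_T)\bigr)
   = n - d_1(\mathbf{x}_S,\mathbf{y}_T),
\]
where the middle equality invokes the definition $d_1(\mathbf{x}_S,\mathbf{y}_T) = \tfrac12\sum_i\abs{x_i-y_i}$. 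Rearranging yields $d_1(\mathbf{x}_S,\mathbf{y}_T) = n - \abs{S\cap T}$, which combined with the definitional identity completes the proof.

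There is no real obstacle here; the only thing worth stating carefully is the pointwise min/abs identity, and the only place one must be slightly attentive is to record explicitly that both multiplicity vectors sum to $n$ (so that $\sum_i(x_i+y_i) = 2n$), since this is exactly where the hypothesis $S,T\in\cS_{n,q}$ enters. If one wanted a symmetric alternative, one could instead argue via $\abs{x_i-y_i} = (x_i - \min\{x_i,y_i\}) + (y_i - \min\{x_i,y_i\})$, i.e., that the $\ell_1$ difference counts the elements of $S$ not matched in $T$ plus those of $T$ not matched in $S$; summing gives $2\,d_1 = (n - \abs{S\cap T}) + (n - \abs{S\cap T})$, hence the same conclusion. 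Either route is a few lines.
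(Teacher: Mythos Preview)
Your proof is correct and follows essentially the same route as the paper: both reduce to the pointwise identity $2\min\{x_i,y_i\}=(x_i+y_i)-\abs{x_i-y_i}$, sum over $i$, and invoke $\sum_i x_i=\sum_i y_i=n$. The only cosmetic difference is that the paper derives that identity via $\max+\min=x_i+y_i$ and $\max-\min=\abs{x_i-y_i}$, whereas you state it directly.
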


\ifshowproofs
\begin{proof}
By definition,
\[
   |S\cap T|
   = \sum_{i=0}^{q-1} \min\{x_i,y_i\}.
\]
Using the identities
\[
   x_i + y_i = \max\{x_i,y_i\} + \min\{x_i,y_i\},
   \qquad
   |x_i - y_i| = \max\{x_i,y_i\} - \min\{x_i,y_i\},
\]
we obtain
\[
   (x_i + y_i) - |x_i-y_i| = 2\min\{x_i,y_i\}.
\]
Summing over all \(i\) and using \(\sum_i x_i=\sum_i y_i=n\), we get
\[
   2n-\sum_{i=0}^{q-1}|x_i-y_i|=2|S\cap T|.
\]
Therefore
\[
   \frac12\sum_{i=0}^{q-1}|x_i-y_i|
   =
   n-|S\cap T|
   =
   d(S,T).
\]
\end{proof}
\fi

Equivalently, an \(S_q[n,M,t]\) multiset deletion-correcting code is a
constant-weight code of length \(q\), weight \(n\), and minimum ordinary
\(\ell_1\)-distance at least \(2t+2\). In the notation commonly used for
constant-weight codes in the \(\ell_1\)-metric, this means that
\[
   S_q(n,t)=A(q,n,2t+2),
\]
where \(A(q,n,d)\) denotes the maximum size of a length-\(q\), weight-\(n\)
code with minimum ordinary \(\ell_1\)-distance at least \(d\).

The \emph{distance} of a code \(\cC\) is
\[
   d(\cC) = \min_{S\neq T\in\cC} d(S,T).
\]

\begin{claim}
\label{clm:distance-vs-deletions}
A code \(\cC\) is an \(S_q[n,M,t]\) multiset deletion-correcting code if and only if
\(d(\cC)\ge t+1\).
\end{claim}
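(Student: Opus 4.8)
The plan is to prove the equivalence between code distance $d(\cC)\ge t+1$ and $t$-deletion correctability by characterizing when two codewords can produce a common channel output, and then relating that to the deletion distance. By Proposition~\ref{prop:metric-equivalence}, it suffices to work with the deletion distance $d(S,T) = n - |S\cap T|$, and I will freely translate between the multiset picture and the $\ell_1$ picture as convenient.

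First I would establish the key geometric fact: for two codewords $S,T\in\cS_{n,q}$, there exists a multiset $S'$ of cardinality at least $n-t$ obtainable from both $S$ and $T$ by deleting at most $t$ elements \emph{if and only if} $d(S,T)\le t$. For the forward direction, if $S'$ is reachable from $S$ by deleting $t_1\le t$ symbols and from $T$ by deleting $t_2\le t$ symbols, then $S'(x)\le S(x)$ and $S'(x)\le T(x)$ for every $x$, so $S'$ is a sub-multiset of $S\cap T$; hence $|S\cap T|\ge |S'| = n - t_1 \ge n-t$, which gives $d(S,T) = n-|S\cap T|\le t$. For the converse, if $d(S,T)\le t$, take $S' = S\cap T$ itself (or any sub-multiset of it of size exactly $n-t$ if one wants equal-length outputs): it has cardinality $|S\cap T| = n - d(S,T)\ge n-t$, and it is obtained from $S$ by deleting $S(x)-\min\{S(x),T(x)\}$ copies of each $x$, totalling $n-|S\cap T| = d(S,T)\le t$ deletions, and symmetrically from $T$.

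From this the claim follows by unwinding definitions. The code $\cC$ corrects $t$ deletions precisely when the output sets $B_t(S) = \{S' : S' \text{ obtained from } S \text{ by at most } t \text{ deletions}\}$ are pairwise disjoint over distinct $S\in\cC$; equivalently, no two distinct codewords $S\neq T$ admit a common output $S'\in B_t(S)\cap B_t(T)$. By the equivalence just proved, $B_t(S)\cap B_t(T)\neq\emptyset$ iff $d(S,T)\le t$. Therefore $\cC$ corrects $t$ deletions iff every pair of distinct codewords satisfies $d(S,T)\ge t+1$, i.e., iff $d(\cC)\ge t+1$. One should note the mild subtlety that a $t$-deletion channel may output multisets of varying cardinalities between $n-t$ and $n$; the argument above handles this since $S\cap T$ already serves as a common output and any shorter common output forces $|S\cap T|$ to be correspondingly large.

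I do not anticipate a serious obstacle here: the only point requiring a little care is the \emph{only if} direction, where one must argue that a common output of \emph{any} admissible length (not just length $n-t$) already forces $d(S,T)\le t$ — but as noted, any common output is a sub-multiset of $S\cap T$, so its cardinality lower-bounds $|S\cap T|$, and the bound $|S'|\ge n-t$ suffices. The remainder is bookkeeping with the definitions of $B_t$ and $d(\cC)$.
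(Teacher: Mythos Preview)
Your argument is correct: the key step---that two codewords admit a common output under at most $t$ deletions if and only if $|S\cap T|\ge n-t$, equivalently $d(S,T)\le t$---is established cleanly in both directions, and the handling of variable-length outputs is sound. Note that the paper itself states this claim without proof (it is treated as a standard fact following~\cite{KovacevicTan2018}), so there is no authorial argument to compare against; your write-up would serve perfectly well as the omitted justification.
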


\ifshowproofs
\begin{proof}
If \(d(\cC)\le t\), then there exist distinct codewords \(S,T\in\cC\) with
\(|S\cap T|\ge n-t\). Hence \(S\) and \(T\) share a common submultiset of size
at least \(n-t\), which can be obtained from both codewords by at most \(t\)
deletions, contradicting unique decoding. Conversely, if two distinct codewords
can produce the same received multiset after at most \(t\) deletions, then they
share a common submultiset of size at least \(n-t\), and therefore
\(d(S,T)=n-|S\cap T|\le t\). Thus correction of \(t\) deletions is equivalent to
minimum distance at least \(t+1\).
\end{proof}
\fi

Kova\v{c}evi\'c and Tan~\cite{KovacevicTan2018} associate to each multiset
\(S\in\cS_{n,q}\) its multiplicity vector
\(\mathbf{x}_S\in\mathbb Z_{\ge0}^q\). Let
\[
   A^{q-1}
   = \bigl\{ \mathbf{z}\in\mathbb{Z}^q : \sum_{i=0}^{q-1} z_i = 0 \bigr\}
\]
be the corresponding \((q-1)\)-dimensional lattice.

\begin{definition}[Linearity]
\label{def:linearity}
A multiset code \(\cC\subseteq\cS_{n,q}\) is called \emph{linear} if there exist a
lattice \(L\subseteq A^{q-1}\) and a vector \(\mathbf{t}\in\mathbb{Z}^q\) with
\(\sum_{i=0}^{q-1} t_i = n\) such that
\[
   \{\mathbf{x}_S : S\in\cC\}
   = (L + \mathbf{t}) \cap \mathbb{Z}_{\ge0}^q.
\]
\end{definition}

\begin{example}
\label{ex:linear-codes}
Let \(n=4\) and \(q=3\), and consider the two multiset codes
\[
   \cC_1 = \{(0,1,3),(1,3,0),(3,0,1)\},\qquad
   \cC_2 = \{(4,0,0),(0,4,0),(0,0,4)\}.
\]
A direct computation shows that \(d(\cC_1)=3\) and \(d(\cC_2)=4\).
The code \(\cC_1\) is a special case of the cyclic construction in
Section~\ref{sec:cyclic}. Indeed, for \(q=3\) and deletion radius \(t=2\),
the construction has
\[
   m=t(t+1)+1=7,
   \qquad
   f(0)=1,\quad f(1)=3,\quad f(2)=0.
\]
All three codewords in \(\cC_1\) satisfy
\[
   x_0+3x_1\equiv 3\pmod 7,
\]
and hence \(\cC_1\subseteq\cC_3(3)\).

The code \(\cC_2\) is also linear: taking
\[
   \mathbf{u}=(-4,4,0),\qquad
   \mathbf{v}=(-4,0,4),
\]
and let
\[
   L=\mathrm{span}_{\mathbb{Z}}\{\mathbf{u},\mathbf{v}\}\subseteq A^2.
\]
Then the affine lattice \(L+(4,0,0)\) intersects \(\mathbb Z_{\ge0}^3\) exactly
in the three points
\[
   (4,0,0),\qquad (0,4,0),\qquad (0,0,4).
\]
\end{example}

\begin{definition}[Sidon-type properties]
\label{def:Bt}
Let \(G\) be an abelian group and let
\(B=\{b_0,\dots,b_{\ell-1}\}\subseteq G\).

We say that \(B\) is a \emph{generalized Sidon set of order \(t\)}, or a
\(B_t\)-set, if all multiset sums of exactly \(t\) elements of \(B\) are
distinct.

We say that \(B\) has the \emph{strong \(B_{\le t}\)-property} if all multiset
sums of at most \(t\) elements of \(B\) are distinct.

We say that \(B\) has the \emph{length-refined \(B_{\le t}\)-property} if, for
every \(0\le r\le t\), all multiset sums of exactly \(r\) elements of \(B\) are
distinct.
\end{definition}

Given a set \(B\) with the length-refined \(B_{\le t}\)-property, define the
additive map
\[
   \Phi(\mathbf{x}) = \sum_{i=0}^{\ell-1} x_i b_i.
\]
If a deletion pattern \(E\) has known cardinality \(r\le t\), then
\[
   \Phi(\mathbf{x}_S)-\Phi(\mathbf{x}_{S\setminus E}) = \Phi(\mathbf{x}_E),
\]
which uniquely identifies \(E\) by the length-refined Sidon property. In
multiset deletion decoding, the cardinality of \(E\) is known from the received
length.

\medskip
The lattice-based notion of linearity and the Sidon-based syndrome viewpoint are
closely related. In the framework of Kova\v{c}evi\'c and
Tan~\cite{KovacevicTan2018}, linear multiset codes may be constructed from
additive syndrome maps, and Sidon-type uniqueness conditions guarantee that
small error patterns have distinct syndromes. Conversely, the Sidon-type
constructions considered in this paper naturally define lattice cosets whose
intersections with the discrete simplex give linear multiset codes.

The present paper uses this relationship only as a construction paradigm. We do
not claim that all optimal multiset deletion-correcting codes must be linear,
and the existence of size-optimal linear codes for arbitrary parameters remains
open.

\section{A Cyclic Sidon-Type \texorpdfstring{$S_q[n,M,t]$}{Sq[n,M,t]} Construction}
\label{sec:cyclic}

In this section we present a cyclic construction of multiset
\(t\)-deletion-correcting codes over alphabets of arbitrary size~\(q\).
The construction fits the length-refined Sidon-type framework of
Definition~\ref{def:Bt} for codes in the multiset space. Its main feature,
however, is more specific: it gives a particularly simple deletion-oriented
syndrome map, valid for every pair of parameters \((q,t)\), with no
finite-field, prime-power, or characteristic assumptions.

The general connection between Sidon-type sets and multiset codes is already
known; see, for example, the framework of Kova\v{c}evi\'c and
Tan~\cite{KovacevicTan2018}. Thus, the novelty of the present construction is
not the abstract Sidon-to-code paradigm. Rather, it is the explicit
base-\((t+1)\) cyclic labeling below, which is tailored to one-sided deletion
errors and is especially effective in the fixed-alphabet, growing-deletion
regime.

\medskip
Let \(\Sigma=\{0,1,\dots,q-1\}\). Define a weight function \(f:\Sigma\to\mathbb Z\) by
\[
   f(i)=
   \begin{cases}
      (t+1)^i, & i=0,1,\dots,q-2,\\[2pt]
      0,       & i=q-1.
   \end{cases}
\]
Set
\[
   m \;\eqdef\; t(t+1)^{q-2}+1.
\]

For \(0\le r\le t\), let \(\cE_r\) denote the set of all multisets of
cardinality \(r\) over \(\Sigma\). Such a multiset represents a possible
deletion pattern of exactly \(r\) symbols. Define
\[
   F_r:\cE_r\longrightarrow \mathbb Z_m,
   \qquad
   F_r(E)\;\eqdef\;\sum_{s\in E} f(s)\pmod m.
\]

\begin{lemma}[Length-refined deletion-Sidon property]
\label{lem:injective-syndrome}
For every \(0\le r\le t\), the map \(F_r:\cE_r\to\mathbb Z_m\) is injective.
\end{lemma}

\begin{proof}
Write \(E(i)\) for the multiplicity of \(i\in\Sigma\) in \(E\). Since
\(|E|=r\le t\), we have
\[
   0\le \sum_{s\in E} f(s)
   =\sum_{i=0}^{q-2} E(i)(t+1)^i
   \le r(t+1)^{q-2}
   \le t(t+1)^{q-2}<m.
\]
Therefore congruence modulo \(m\) is the same as equality over the integers for
these sums. If \(E_1,E_2\in\cE_r\) satisfy \(F_r(E_1)=F_r(E_2)\), then
uniqueness of base-\((t+1)\) representations gives \(E_1(i)=E_2(i)\) for
\(i=0,\dots,q-2\). Since both multisets have cardinality \(r\), the remaining
multiplicity at \(q-1\) is also determined. Hence \(E_1=E_2\).
\end{proof}

\begin{remark}
Lemma~\ref{lem:injective-syndrome} shows that the label multiset
\[
   \{1,t+1,\dots,(t+1)^{q-2},0\}\subseteq \mathbb Z_m
\]
has the length-refined \(B_{\le t}\)-property of Definition~\ref{def:Bt}.
It does not have the strong \(B_{\le t}\)-property: since \(f(q-1)=0\), adding
copies of the symbol \(q-1\) to a deletion pattern does not change its syndrome.
This causes no problem for deletion correction, because the decoder knows the
number \(r\) of deleted symbols from the received length, and therefore only
needs injectivity on the fixed set \(\cE_r\).
\end{remark}

\begin{construction}[Cyclic deletion-Sidon construction]
\label{const:cyclic}
For a residue \(a\in\mathbb Z_m\), define
\[
   \cC_q(a)
   =
   \left\{
      S\in\cS_{n,q}: \sum_{s\in S} f(s) \equiv a \pmod m
   \right\}.
\]
Equivalently, if \(S\) is represented by its multiplicity vector
\(\mathbf x_S=(x_0,\dots,x_{q-1})\), then
\[
   \cC_q(a)=\{S\in\cS_{n,q}:L(\mathbf x_S)\equiv a\pmod m\},
\]
where
\[
   L(\mathbf x)\eqdef \sum_{i=0}^{q-1} f(i)x_i
   =\sum_{i=0}^{q-2}(t+1)^i x_i\pmod m.
\]
Thus Construction~\ref{const:cyclic} is a linear multiset code in the sense of
Kova\v{c}evi\'c and Tan~\cite{KovacevicTan2018}.
\end{construction}

\begin{theorem}[Correctness and an average-size syndrome class]
\label{thm:cyclic-correctness}
For every \(a\in\mathbb Z_m\), the code \(\cC_q(a)\) corrects up to \(t\)
deletions. Moreover, there exists a residue \(a^*\in\mathbb Z_m\) such that
\[
   |\cC_q(a^*)|
   \ge
   \frac{|\cS_{n,q}|}{m}
   =
   \frac{1}{t(t+1)^{q-2}+1}
   \binom{n+q-1}{q-1}.
\]
Consequently, for this residue,
\[
   R(\cC_q(a^*))
   \le
   \log_q m
   =
   \log_q\!\bigl(t(t+1)^{q-2}+1\bigr).
\]
\end{theorem}

\begin{proof}
Let \(S\in\cC_q(a)\) be transmitted, and suppose that \(r\le t\) symbols are
deleted. Let \(E\in\cE_r\) be the deleted multiset, and let
\[
   Y=S\setminus E
\]
be the received multiset. Then
\[
   \sum_{s\in Y}f(s)
   \equiv
   a-F_r(E)
   \pmod m.
\]
Thus the decoder computes
\[
   F_r(E)
   \equiv
   a-\sum_{s\in Y}f(s)
   \pmod m.
\]
Since the transmitted multisets have cardinality \(n\), the decoder knows
\[
   r=n-|Y|.
\]
Therefore the relevant deletion pattern belongs to the fixed set \(\cE_r\).
By Lemma~\ref{lem:injective-syndrome}, the value \(F_r(E)\) uniquely determines
\(E\). Hence the transmitted multiset is uniquely recovered as
\[
   S=Y\uplus E.
\]
Thus \(\cC_q(a)\) corrects up to \(t\) deletions.

For the size statement, the \(m\) sets
\[
   \{\cC_q(a):a\in\mathbb Z_m\}
\]
partition \(\cS_{n,q}\). Hence, by the pigeonhole principle, at least one
residue \(a^*\in\mathbb Z_m\) satisfies
\[
   |\cC_q(a^*)|\ge \frac{|\cS_{n,q}|}{m}.
\]
The redundancy bound follows immediately from the definition
\[
   R(\cC)=\log_q|\cS_{n,q}|-\log_q|\cC|.
\]
\end{proof}

A natural concern in Construction~\ref{const:cyclic} is that, although the
pigeonhole principle guarantees the existence of a residue \(a^*\) whose
syndrome class has size at least the average, finding such a residue might
require a nontrivial offline search. The next result shows that, asymptotically,
this choice is not crucial: when \(q\) and \(t\) are fixed, every fixed residue
class has the same leading-order size. More precisely, every fixed residue
\(a\in\mathbb Z_m\) has size equal to the average up to an error term
\(O_{q,t}(n^{q-2})\). Thus one may choose any convenient residue \(a\), and this
affects the code size only in a lower-order term.

In fact, a sharper error term is possible by a more careful analysis of the
poles in the roots-of-unity expansion: the error can be improved from
\(O_{q,t}(n^{q-2})\) to
\(O_{q,t}(n^{\lceil(q-1)/2\rceil-1})\). To keep the main text transparent, we
first prove the simpler estimate, which already shows asymptotic balance and
suffices for the redundancy statement. We then state the sharper estimate, whose
proof is deferred to Appendix~\ref{app:sharp-balance}.

We use the notation \(O_{q,t}(\cdot)\) to indicate that the implicit constant
may depend on \(q\) and \(t\), but not on \(n\). In the following result,
\(q\) and \(t\) are fixed while \(n\to\infty\).

\begin{proposition}[Asymptotic balance of syndrome classes]
\label{prop:cyclic-balance}
For fixed \(q\) and \(t\), and for every fixed residue \(a\in\mathbb Z_m\),
\[
   |\cC_q(a)|
   =
   \frac{1}{m}\binom{n+q-1}{q-1}
   +O_{q,t}(n^{q-2}).
\]
Consequently, for every fixed residue \(a\),
\[
   R(\cC_q(a))=\log_q m+O_{q,t}\left(\frac1n\right).
\]
\end{proposition}

\begin{proof}
Let \(N_a(n)=|\cC_q(a)|\). Thus \(N_a(n)\) is the number of vectors
\(\mathbf x=(x_0,\dots,x_{q-1})\in\mathbb Z_{\ge0}^q\) such that
\(x_0+\cdots+x_{q-1}=n\) and
\[
   L(\mathbf x)=\sum_{i=0}^{q-1} f(i)x_i\equiv a\pmod m.
\]
Let \(\omega=e^{2\pi i/m}\). Then \(\omega\) is a primitive \(m\)-th root of
unity. We use the standard roots-of-unity filter:
\[
   \frac1m\sum_{j=0}^{m-1}\omega^{j(s-a)}
   =
   \begin{cases}
      1, & s\equiv a\pmod m,\\
      0, & s\not\equiv a\pmod m.
   \end{cases}
\]
Indeed, if \(s\equiv a\pmod m\), then every term in the sum is \(1\). If
\(s\not\equiv a\pmod m\), set \(r=\omega^{s-a}\). Then \(r\ne1\), but
\(r^m=1\), and hence
\[
   \sum_{j=0}^{m-1}\omega^{j(s-a)}
   =
   \sum_{j=0}^{m-1}r^j
   =
   \frac{1-r^m}{1-r}
   =0.
\]
Applying this filter with \(s=L(\mathbf x)\), the indicator of the congruence
condition can be written as
\[
   \mathbf 1_{\{L(\mathbf x)\equiv a\pmod m\}}
   =
   \frac1m\sum_{j=0}^{m-1}\omega^{j(L(\mathbf x)-a)}.
\]
Therefore,
\[
   N_a(n)
   =
   \sum_{\substack{\mathbf x\in\mathbb Z_{\ge0}^q\\
                   x_0+\cdots+x_{q-1}=n}}
   \mathbf 1_{\{L(\mathbf x)\equiv a\pmod m\}}
   =
   \sum_{\substack{\mathbf x\in\mathbb Z_{\ge0}^q\\
                   x_0+\cdots+x_{q-1}=n}}
   \frac1m\sum_{j=0}^{m-1}\omega^{j(L(\mathbf x)-a)}.
\]
Interchanging the two sums gives
\[
   N_a(n)
   =
   \frac1m\sum_{j=0}^{m-1}\omega^{-aj}
   \sum_{\substack{\mathbf x\in\mathbb Z_{\ge0}^q\\
                   x_0+\cdots+x_{q-1}=n}}
   \omega^{jL(\mathbf x)}.
\]
The inner sum is expressed by the generating function
\[
   \sum_{\mathbf x\in\mathbb Z_{\ge0}^q}
   z^{x_0+\cdots+x_{q-1}}
   \omega^{jL(\mathbf x)}
   =
   \prod_{i=0}^{q-1}\frac{1}{1-z\omega^{j f(i)}}.
\]
Thus
\[
   N_a(n)
   =
   \frac1m
   \sum_{j=0}^{m-1}
   \omega^{-aj}
   [z^n]\prod_{i=0}^{q-1}
   \frac{1}{1-z\omega^{j f(i)}}.
\]
The term \(j=0\) contributes
\[
   \frac1m[z^n]\frac{1}{(1-z)^q}
   =
   \frac1m\binom{n+q-1}{q-1}.
\]
It remains to bound the terms \(j\ne0\). Fix \(j\in\{1,\dots,m-1\}\), and set
\[
   G_j(z)=\prod_{i=0}^{q-1}\frac{1}{1-z\omega^{j f(i)}}.
\]
The poles of \(G_j(z)\) lie on the unit circle and occur at
\(z=\omega^{-j f(i)}\). The growth of \([z^n]G_j(z)\) is controlled by the
maximal multiplicity of these poles. Since \(f(q-1)=0\), one factor has a pole
at \(z=1\). Since \(f(0)=1\) and \(j\ne0\), another factor has a pole at
\(z=\omega^{-j}\ne1\). Hence no pole has multiplicity \(q\), and every pole has
multiplicity at most \(q-1\).

By partial fractions, \(G_j(z)\) is a finite sum of terms of the form
\[
   \frac{c}{(1-\lambda z)^s},
\]
where \(|\lambda|=1\) and \(1\le s\le q-1\). The coefficient of \(z^n\) in such
a term is
\[
   c\lambda^n\binom{n+s-1}{s-1}.
\]
Since \(|\lambda|=1\) and \(s\le q-1\), each such term contributes
\(O_{q,t}(n^{q-2})\). Summing over the fixed number of nonzero \(j\)'s gives
\[
   N_a(n)
   =
   \frac1m\binom{n+q-1}{q-1}+O_{q,t}(n^{q-2}).
\]
This proves the asserted asymptotic formula for \(|\cC_q(a)|\).

It remains to derive the redundancy estimate. Since
\[
   |\cS_{n,q}|=\binom{n+q-1}{q-1}=\Theta_q(n^{q-1}),
\]
the estimate above can be written as
\[
   |\cC_q(a)|
   =
   \frac{|\cS_{n,q}|}{m}
   \left(1+\varepsilon_n\right),
   \qquad
   \varepsilon_n=O_{q,t}\left(\frac1n\right).
\]
Therefore
\[
   \frac{|\cS_{n,q}|}{|\cC_q(a)|}
   =
   m(1+\varepsilon_n)^{-1}.
\]
Using the standard estimate
\[
   \log_q(1+\varepsilon_n)=O(\varepsilon_n)
   \qquad (\varepsilon_n\to0),
\]
we obtain
\[
   R(\cC_q(a))
   =
   \log_q\frac{|\cS_{n,q}|}{|\cC_q(a)|}
   =
   \log_q m-\log_q(1+\varepsilon_n)
   =
   \log_q m+O_{q,t}\left(\frac1n\right).
\]
\end{proof}

We now record the sharper form of the asymptotic balance estimate. The proof is
based on the same roots-of-unity expansion, but uses the additional fact that,
for \(j\ne0\), two adjacent powers of \(t+1\) cannot give the same pole.

\begin{proposition}[Sharper asymptotic balance]
\label{prop:cyclic-balance-sharp}
For fixed \(q\) and \(t\), and for every fixed residue \(a\in\mathbb Z_m\),
\[
   |\cC_q(a)|
   =
   \frac{1}{m}\binom{n+q-1}{q-1}
   +
   O_{q,t}\!\left(n^{\left\lceil\frac{q-1}{2}\right\rceil-1}\right).
\]
Consequently,
\[
   R(\cC_q(a))
   =
   \log_q m
   +
   O_{q,t}\!\left(
      n^{-q+\left\lceil\frac{q-1}{2}\right\rceil}
   \right).
\]
\end{proposition}

\begin{proof}
See Appendix~\ref{app:sharp-balance}.
\end{proof}

\begin{remark}[Choice of syndrome class]
\label{rem:choice-of-residue}
Proposition~\ref{prop:cyclic-balance} also shows that, for fixed \(q,t\), every
fixed residue has the same leading asymptotic size. Hence choosing a convenient
residue \(a\) does not change the leading number of codewords or the asymptotic
redundancy. In particular, if one is interested in an asymptotically large code,
there is no need to search for a largest residue class: one may fix any residue
\(a\), and the resulting code has redundancy \(\log_q m+o(1)\).

For finite blocklengths, the lower bound
\(|\cC_q(a^*)|\ge |\cS_{n,q}|/m\) is guaranteed for at least one residue
\(a^*\), by averaging. If one wants to use a largest, or at least an
average-size, syndrome class at a fixed finite blocklength, then finding such a
residue is an offline initialization step. For fixed \(q,t\), the class sizes
can be computed from the generating function in Proposition~\ref{prop:cyclic-balance},
or by a dynamic program over the alphabet positions, total multiplicity, and
residue modulo \(m\). If \(q\) and \(t\) are allowed to vary, this preprocessing
is not \(O(n)\), since \(m=t(t+1)^{q-2}+1\) may be large.

After a residue has been fixed, whether it is an arbitrary convenient residue
\(a\) or a residue \(a^*\) found during initialization, the online use of the
code is the same. In particular, membership testing and decoding remain linear
in \(n\) for fixed \(q,t\), as described in
Proposition~\ref{prop:cyclic-complexity}.
\end{remark}

\begin{proposition}[Online complexity for fixed \texorpdfstring{$q,t$}{q,t}]
\label{prop:cyclic-complexity}
For fixed \(q\) and \(t\), after the residue \(a\) is fixed and the deletion
syndrome tables are precomputed, membership testing and decoding for
\(\cC_q(a)\) can be implemented in \(O(n)\) time.
\end{proposition}

\begin{proof}
For each \(0\le r\le t\), precompute the table
\[
   \mathcal T_r=
   \{(F_r(E),E):E\in\cE_r\}.
\]
By Lemma~\ref{lem:injective-syndrome}, each table is a lookup table from a
syndrome value to at most one deletion multiset. The total table size
\(\sum_{r=0}^t\binom{r+q-1}{q-1}\) is a constant when \(q,t\) are fixed.

Given a received multiset \(Y\), the decoder computes \(r=n-|Y|\) and
\[\sigma=a-\sum_{s\in Y}f(s)\pmod m\]
in one pass over \(Y\). It then retrieves the unique \(E\in\cE_r\) satisfying
\(F_r(E)=\sigma\), and outputs \(Y\uplus E\). Thus online decoding is \(O(n)\)
for fixed \(q,t\). Membership testing is also \(O(n)\), since it requires only
computing \(L(\mathbf x_S)\pmod m\).
\end{proof}

\begin{remark}
If \(q\) and \(t\) are not fixed, the online complexity also depends on the cost
of arithmetic modulo \(m\), on the alphabet size, and on the size of the
precomputed deletion tables. Thus the clean \(O(n)\)-statement is intended in
the fixed-parameter sense: fixed alphabet size and fixed deletion radius, with
\(n\to\infty\). A full enumerative encoder for a largest syndrome class may
require additional offline preprocessing, for instance dynamic programming
tables for ranking and unranking multiplicity vectors in the chosen class.
\end{remark}

\begin{example}[The ternary case]
\label{ex:ternary-cyclic}
For \(q=3\), the general construction has modulus
\[
   m=t(t+1)+1=t^2+t+1.
\]
The original weights are
\[
   f(0)=1,\qquad f(1)=t+1,\qquad f(2)=0.
\]
Equivalently, after subtracting \(1\) from each weight, relabeling the two
nonzero symbols, and absorbing the fixed contribution of
\(\sum_i x_i=n\) into the residue, one may use the weights
\[
   f(0)=0,\qquad f(1)=-1,\qquad f(2)=t
   \pmod{t^2+t+1}.
\]
Thus, in multiplicity form \(\mathbf x_S=(x_0,x_1,x_2)\), the defining
congruence may be written as
\[
   -x_1+t x_2\equiv a\pmod{t^2+t+1}.
\]
The resulting ternary code corrects up to \(t\) deletions and, for a largest
syndrome class, has redundancy at most
\[
   \log_3(t^2+t+1).
\]

The choice of the residue \(a\) can affect the finite-length size of the code,
but only in a lower-order way. For example, when \(t=2\) and \(m=7\), the class
sizes for \(n=8\), ordered by residues \(a=0,\dots,6\), are
\[
   (7,6,7,6,6,6,7),
\]
whereas for \(n=12\) all seven residue classes have size \(13\). In general, by
Proposition~\ref{prop:cyclic-balance-sharp}, for fixed \(t\) all ternary
residue classes satisfy
\[
   |\cC_3(a)|
   =
   \frac{1}{t^2+t+1}\binom{n+2}{2}
   +O_t(1).
\]
Thus different residues may give slightly different finite-length code sizes,
but choosing a convenient residue \(a\) gives the same leading number of
codewords and the same asymptotic redundancy.
\end{example}

\subsection{Comparison with Related Sidon-Type Constructions}
\label{subsec:cyclic-comparison}

We now compare Construction~\ref{const:cyclic} with several standard Sidon-type
constructions. The purpose of this comparison is to clarify what is new in the
present construction and, more importantly, in which parameter regime it is
useful.

\paragraph{The general Sidon framework of Kova\v{c}evi\'c--Tan.}
The use of Sidon-type sets for constructing codes in the multiset/simplex
setting is not new. In particular, Kova\v{c}evi\'c and Tan developed a general
linear-code construction based on \(B_h\)-sets in finite Abelian groups
\cite{KovacevicTan2018}. Let \(G\) be a finite Abelian group and let
\(B=\{0,b_1,\dots,b_{q-1}\}\subseteq G\) be a \(B_h\)-set. Given such a set,
one may define a linear multiset code by imposing the group-syndrome constraint
\[
   \sum_{i=1}^{q-1} x_i b_i = \gamma \qquad \text{in }G.
\]
The \(B_h\)-property ensures that small error patterns have distinct syndromes.
Thus suitable Sidon-type sets give multiset codes through a single Abelian group
constraint, with redundancy essentially \(\log_q|G|\).

Hence the novelty of Construction~\ref{const:cyclic} is not the general
Sidon-to-code principle. Rather, the point is that the particular labeling
\[
   0,1,\dots,q-1
   \longmapsto
   1,t+1,(t+1)^2,\dots,(t+1)^{q-2},0
   \pmod{t(t+1)^{q-2}+1}
\]
is elementary, works for every pair \((q,t)\), and is tailored specifically to
one-sided deletion errors.

\paragraph{Varshamov power sums.}\label{par:varshamov-power-sum}
A classical construction of asymmetric-error-correcting codes due to
Varshamov~\cite{Varshamov1973} is based on power-sum parity checks. This
construction is also used in the context of DNA sequence profiles by Kiah,
Puleo, and Milenkovic~\cite{KiahPuleoMilenkovic2016}, and is reviewed by
Milenkovic and Pan~\cite{MilenkovicPan2024}.

Adapted to the present multiset deletion setting, choose a prime
\(p>\max\{q,t\}\) and distinct nonzero elements
\(\alpha_0,\dots,\alpha_{q-1}\in\mathbb F_p\). Define the \(t\times q\)
parity-check matrix
\[
   H
   =
   \begin{pmatrix}
      \alpha_0      & \alpha_1      & \cdots & \alpha_{q-1}\\
      \alpha_0^2    & \alpha_1^2    & \cdots & \alpha_{q-1}^2\\
      \vdots        & \vdots        &        & \vdots\\
      \alpha_0^t    & \alpha_1^t    & \cdots & \alpha_{q-1}^t
   \end{pmatrix}.
\]
A codeword is a multiplicity vector
\[
   \mathbf x=(x_0,\dots,x_{q-1})\in\mathbb Z_{\ge0}^q,
   \qquad
   \sum_{i=0}^{q-1}x_i=n,
\]
satisfying a fixed power-sum syndrome constraint. Identifying
\(\cS_{n,q}\) with its multiplicity-vector representation, for a syndrome
\(\beta\in\mathbb F_p^t\), define
\[
   \cC_{\mathrm{Var}}(\beta)
   =
   \left\{
      \mathbf x\in\mathbb Z_{\ge0}^q:
      \sum_{i=0}^{q-1}x_i=n,\ 
      H\mathbf x=\beta\pmod p
   \right\}.
\]

If a deletion pattern \(E\) of size \(r\le t\) occurs, then the received vector
is \(\mathbf y=\mathbf x-E\), and the syndrome difference is
\[
   HE=\beta-H\mathbf y.
\]
Equivalently,
\[
   HE
   =
   \left(
      \sum_i E(i)\alpha_i,\,
      \sum_i E(i)\alpha_i^2,\,
      \dots,\,
      \sum_i E(i)\alpha_i^t
   \right).
\]
Thus \(HE\) records the first \(t\) power sums of the deleted multiset
\[
   \{\alpha_i \text{ with multiplicity }E(i)\}.
\]
Since \(r\le t\) and \(p>t\), Newton's identities determine this deleted
multiset from its power sums. As the \(\alpha_i\)'s are distinct, this uniquely
determines the multiplicities \(E(i)\), and hence the deleted symbols. Therefore
each syndrome class \(\cC_{\mathrm{Var}}(\beta)\) corrects up to \(t\) deletions.

The syndrome space has size \(p^t\), so by averaging there exists a syndrome
class of size at least \(|\cS_{n,q}|/p^t\), with redundancy at most
\(t\log_q p\). This is strong for fixed \(t\) and large \(q\). In contrast, the
cyclic construction has redundancy
\[
   (q-1)\log_q t+O_q(1),
\]
which is logarithmic in \(t\) for fixed \(q\).

The formulation above is the prime-field version of the construction. More
generally, the same power-sum idea may be formulated over a finite field
\(\mathbb F_s\), provided that the characteristic is larger than \(t\) and the
alphabet symbols are assigned distinct labels
\(\alpha_0,\dots,\alpha_{q-1}\in\mathbb F_s\). For \(t=1\), this reduces to a
single additive syndrome
\[
   \sum_i x_i\alpha_i=\beta.
\]
Thus, when \(q\) is prime, the sum-modulo construction over \(\mathbb Z_q\) is
exactly the \(t=1\) finite-field power-sum construction over \(\mathbb F_q\),
with labels \(\alpha_i=i\). For nonprime \(q\), different additive groups of
order \(q\) may lead to different finite-length syndrome classes; this point is
used explicitly in the quaternary benchmark in
Section~\ref{subsec:quaternary-single-optimal}.

From a computational point of view, syndrome computation for the Varshamov
construction requires \(O(qt)\) field operations when the word is given as a
multiplicity vector, or \(O(nt)\) operations when it is given as a list of
symbols. Decoding additionally requires reconstructing the deleted multiset, or
equivalently the degree-\(r\le t\) polynomial whose roots, with multiplicities,
are the deleted field elements, from its power sums, for example using Newton
identities and root recovery over \(\mathbb F_p\). Thus, for fixed \(t\), this
decoding is efficient, but it is algebraically more involved than the
table-based fixed-\(q,t\) decoder of Construction~\ref{const:cyclic}. As in our
construction, choosing a largest finite-length syndrome class is an offline
issue rather than part of online decoding.

\paragraph{Bose--Chowla and finite-field Sidon constructions.}
Classical finite-field constructions of Sidon sets, such as the Bose--Chowla
construction~\cite{BoseChowla1960}, provide explicit \(B_t\)-sets with
near-optimal parameters. More precisely, when \(s\) is a prime power, the
Bose--Chowla construction gives a \(B_t\)-set of size \(s\) in a cyclic group of
order \(s^t-1\). In the coding framework of Kova\v{c}evi\'c and
Tan~\cite{KovacevicTan2018}, such a set yields a linear multiset code by
imposing a single group-syndrome constraint.

If the alphabet size \(q\) is itself a prime power, one may take \(s=q\). The
resulting syndrome group has size \(q^t-1\), giving redundancy essentially
\(t\). More generally, if \(q\) is not a prime power, one may take a prime power
\(s\ge q\) and use \(q\) elements from the Bose--Chowla set, giving redundancy
about \(t\log_q s\).

These constructions are very strong when \(t\) is fixed and \(q\) is large.
However, they are finite-field constructions and therefore come with
prime-power parameter restrictions, or require embedding the alphabet into a
larger field. Moreover, they are general Sidon-type constructions and do not
exploit the one-sided nature of deletion errors. In contrast,
Construction~\ref{const:cyclic} works uniformly for all \(q\) and \(t\), and its
redundancy grows only logarithmically in \(t\) when \(q\) is fixed.

For fixed \(q,t\), decoding for a Bose--Chowla type construction can again be
implemented by a precomputed table of all deletion patterns and their group
sums. Without such preprocessing, decoding amounts to solving the corresponding
Sidon-sum representation problem in the finite Abelian group.

\paragraph{The construction of Xiao--Zhou.}
Xiao and Zhou~\cite{XiaoZhou2024} construct lattice packings in the Lee metric
using a power-sum Sidon-type set. For Lee radius \(r\), over a finite field
\(\mathbb F_s\), they use
\[
   R=\{(1,x,x^2,\ldots,x^r):x\in\mathbb F_s^*\}
   \subseteq C_{2r+1}\times\mathbb F_s^r,
\]
under the characteristic assumption
\(\operatorname{char}(\mathbb F_s)>r+1\).

If one adapts this construction to the present multiset deletion setting by
taking \(t=r\) and identifying the alphabet with \(\mathbb F_s^*\), then the
alphabet size is \(q=s-1\). The corresponding syndrome group has size
\[
   (2t+1)s^t=(2t+1)(q+1)^t,
\]
and therefore the redundancy, measured in base \(q\), is
\[
   \log_q\!\bigl((2t+1)(q+1)^t\bigr)
   =
   t\log_q(q+1)+\log_q(2t+1).
\]

This is very strong when \(t\) is fixed and \(q\) is large. Moreover, the
construction is designed for signed Lee errors, which is more general than the
one-sided deletion errors considered here. On the other hand, for the present
deletion-only model this generality comes with some overhead: the construction
requires finite-field parameters, in particular \(q+1\) must be a prime power
or the alphabet must be embedded into a larger field, and it requires the
corresponding characteristic assumption. In contrast,
Construction~\ref{const:cyclic} works for every \(q\) and \(t\), has no
finite-field restrictions, and has redundancy logarithmic in \(t\) for fixed
\(q\).

Computationally, the Xiao--Zhou construction is again power-sum based. Syndrome
computation is efficient for fixed \(t\), but decoding relies on recovering the
error pattern from power-sum information over a finite field. This is natural in
the Lee-metric setting, but is algebraically more involved than the
deletion-specific base-\((t+1)\) recovery used in
Construction~\ref{const:cyclic}.

\paragraph{Summary of construction tradeoffs.}
The comparison above concerns explicit Sidon-type constructions. A separate
comparison between these constructions and general upper bounds, including the
Kova\v{c}evi\'c--Tan bounds, sphere packing, and the projection bound, is given
in Section~\ref{sec:bounds}. Here we summarize only the construction-level
tradeoffs.

\begin{center}
\resizebox{\linewidth}{!}{$
\begin{array}{c|c|c|c}
\text{Construction} & \text{Redundancy} & \text{Useful regime} & \text{Comments} \\ \hline
\text{Cyclic deletion-Sidon}
&
(q-1)\log_q t+O_q(1)
&
q \text{ fixed},\ t\to\infty
&
\text{all }q,t;\ \text{deletion-specific}
\\
\text{Varshamov power sums}
&
t\log_q p,\ p>\max\{q,t\}
&
t \text{ fixed},\ q\to\infty
&
\text{power-sum decoding over }\mathbb F_p
\\
\text{Bose--Chowla type}
&
\text{about } t
&
t \text{ fixed},\ q\to\infty
&
\text{finite-field / prime-power parameters}
\\
\text{Xiao--Zhou}
&
t\log_q(q+1)+\log_q(2t+1)
&
\text{signed Lee errors}
&
q+1 \text{ prime power or embedding}
\end{array}
$}
\end{center}

Thus the constructions are complementary rather than comparable by a single
dominance relation. Finite-field and power-sum constructions are strongest in
the large-alphabet, fixed-radius regime, whereas
Construction~\ref{const:cyclic} is universal, deletion-specific, and has
logarithmic dependence on \(t\) for fixed alphabet size.

\section{Bounds on Multiset Deletion Codes}
\label{sec:bounds}

In this section we discuss upper bounds on the maximal cardinality \(S_q(n,t)\)
of multiset \(t\)-deletion-correcting codes. Throughout, the ambient space is
\(\cS_{n,q}\), equipped with the deletion distance
\[
   d(S,T)=n-|S\cap T|.
\]

We begin with general bounds that apply for arbitrary parameters \(n,q,t\). We
first recall an explicit upper bound of Kova\v{c}evi\'c and
Tan~\cite[Theorem~17, Eq.~(31)]{KovacevicTan2018}. We then compare this bound
with two elementary bounds: a sphere-packing bound and a projection bound. The
comparison is useful because these bounds are not uniformly ordered over all
parameter regimes.

After these general bounds, we turn to extremal deletion regimes, where \(t\) is
close to \(n\). In such cases, the structure of intersections between codewords
becomes more restrictive, and sharper bounds can be obtained by direct
combinatorial arguments and incidence-graph methods.

\subsection{General Upper Bounds and Comparison}
\label{subsec:general-bounds-comparison}

We begin by recalling the explicit upper bound of Kova\v{c}evi\'c and
Tan~\cite[Theorem~17, Eq.~(31)]{KovacevicTan2018}. Translating their notation to
ours, for an \(S_q[n,M,t]\) multiset \(t\)-deletion-correcting code, their bound
gives
\[
   S_q(n,t)
   \le
   K(n,q,t)
   \eqdef
   \frac{\binom{n+t+q-1}{q-1}}
        {\binom{t+q-1}{q-1}}.
\]
We next record two elementary bounds, sphere packing and projection, and compare
them with \(K(n,q,t)\).

\subsubsection{Sphere-packing bound}
\label{subsubsec:sphere-packing-bound}

For \(S\in\cS_{n,q}\) and \(\rho\ge0\), define the radius-\(\rho\) metric ball by
\[
   B_\rho(S)=\{T\in\cS_{n,q}:d(S,T)\le \rho\}.
\]
If \(\cC\subseteq\cS_{n,q}\) corrects \(t\) deletions, then by
Claim~\ref{clm:distance-vs-deletions}, \(d(\cC)\ge t+1\). Hence the ordinary
packing radius in \((\cS_{n,q},d)\) is \(\rho=\lfloor t/2\rfloor\), and the
balls \(\{B_\rho(S):S\in\cC\}\) are pairwise disjoint.

Since \(\cS_{n,q}\) is not homogeneous, ball sizes depend on the center. The
sphere-packing argument therefore uses
\[
   B_{\min}(n,q,\rho)=\min_{S\in\cS_{n,q}}|B_\rho(S)|.
\]
It was shown in~\cite[Lemma~5]{KreindelEssayagZabokritskiy2026} that this
minimum is attained at extreme multisets and equals
\[
   B_{\min}(n,q,\rho)=\binom{\rho+q-1}{q-1}.
\]
Thus
\[
   S_q(n,t)
   \le
   P_{\mathrm{sp}}(n,q,t)
   \eqdef
   \frac{\binom{n+q-1}{q-1}}
        {\binom{\lfloor t/2\rfloor+q-1}{q-1}}.
\]

For fixed \(q,t\) and \(n\to\infty\), this sphere-packing bound is weaker than
\(K(n,q,t)\), since its denominator contains
\(\binom{\lfloor t/2\rfloor+q-1}{q-1}\) rather than
\(\binom{t+q-1}{q-1}\). The two bounds are not uniformly ordered, however.
Writing \(\rho=\lfloor t/2\rfloor\), we have
\[
   \frac{K(n,q,t)}{P_{\mathrm{sp}}(n,q,t)}
   =
   \prod_{i=1}^{q-1}
   \frac{(\rho+i)(n+t+i)}{(t+i)(n+i)}.
\]
Thus \(P_{\mathrm{sp}}\) is stronger than \(K\) precisely when this product is at
least \(1\). For example, when \(n=14\), \(q=5\), and \(t=12\), one gets
\(P_{\mathrm{sp}}(14,5,12)=3060/210\approx14.57\), whereas
\(K(14,5,12)=27405/1820\approx15.06\). Since the code size is an integer, the
sphere-packing bound gives \(M\le14\), while the other two bounds give only
\(M\le15\).

\subsubsection{Projection bound}
\label{subsubsec:projection-bound}

\begin{lemma}[Projection bound]
\label{lem:projection-bound}
Let \(\cC\subseteq\cS_{n,q}\) be an \(S_q[n,M,t]\) multiset
\(t\)-deletion-correcting code. Then
\[
   |\cC|\le P_{\mathrm{proj}}(n,q,t)\eqdef |\cS_{n-t,q}|
   =\binom{n-t+q-1}{q-1}.
\]
\end{lemma}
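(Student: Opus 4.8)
The plan is to set up an injection from the code $\cC$ into the smaller multiset space $\cS_{n-t,q}$ by deleting a fixed number of elements in a canonical way from each codeword, and then argue that this map is injective precisely because $\cC$ corrects $t$ deletions.

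First I would fix a canonical ``deletion rule'' that sends every multiset $S\in\cS_{n,q}$ to some multiset $\pi(S)\in\cS_{n-t,q}$ obtained by removing exactly $t$ elements from $S$ — for instance, remove the $t$ elements with the largest symbol values (equivalently, decrement the top coordinates of $\mathbf{x}_S$ in a fixed priority order). The only property of $\pi$ that matters is that $\pi(S)$ is always a sub-multiset of $S$ of cardinality $n-t$, so that $\pi(S)$ is a legitimate channel output when $S$ is transmitted through the $t$-deletion channel.

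Next I would show $\pi$ restricted to $\cC$ is injective. Suppose $S,T\in\cC$ with $\pi(S)=\pi(T)=:S'$. Then $S'$ is a possible channel output from $S$ and also a possible channel output from $T$, since it is a sub-multiset of cardinality $n-t$ of each. By Claim~\ref{clm:distance-vs-deletions} (equivalently, by the defining property of a $t$-deletion-correcting code), the decoding balls of distinct codewords are disjoint, so a single received word $S'$ cannot be reachable from two distinct codewords; hence $S=T$. Therefore $|\cC|=|\pi(\cC)|\le|\cS_{n-t,q}|=\binom{n-t+q-1}{q-1}$, which is the claimed bound.

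The only mild subtlety — and the step I would be most careful about — is making sure the chosen $\pi$ always lands in $\cS_{n-t,q}$ and always produces a genuine sub-multiset; this requires $t\le n$, which holds in the regime of interest (indeed for $t=n-k$ with $k\ge 1$), and it requires the deletion rule to be total, i.e.\ well-defined for every $S$ regardless of how the multiplicities are distributed. Picking the ``greedily decrement a fixed coordinate order until $t$ elements are removed'' rule makes totality transparent. No finer structure of the deletion metric is needed beyond Claim~\ref{clm:distance-vs-deletions}, which is why, as the text notes, this bound remains meaningful even when the sphere-packing radius $r=\lfloor(d-1)/2\rfloor$ is zero.
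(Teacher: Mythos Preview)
Your proof is correct and is essentially the same argument as the paper's, just viewed from the opposite direction: you build an explicit injection $\pi:\cC\to\cS_{n-t,q}$ via a canonical deletion rule, whereas the paper fixes $S'\in\cS_{n-t,q}$ and shows that at most one of its extensions $S'\uplus E$ (with $E\in\cS_{t,q}$) can lie in $\cC$. Both hinge on the identical observation that two distinct codewords cannot share a common sub-multiset of size $n-t$, so the difference is purely cosmetic; the paper's fiber-counting formulation spares you the ``mild subtlety'' of specifying a total deletion rule, but your version is equally rigorous once that rule is fixed.
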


\begin{proof}
For each codeword \(S\), choose one arbitrary submultiset
\(\pi(S)\subseteq S\) of size \(n-t\). If \(\pi(S)=\pi(T)=R\) for two distinct
codewords \(S,T\in\cC\), then both \(S\) and \(T\) can produce \(R\) after
\(t\) deletions, contradicting \(t\)-deletion correction. Hence \(\pi\) is
injective into \(\cS_{n-t,q}\), and therefore
\[
   |\cC|\le |\cS_{n-t,q}|.
\]
\end{proof}

The ratio with the Kova\v{c}evi\'c--Tan bound is
\[
   \frac{K(n,q,t)}{P_{\mathrm{proj}}(n,q,t)}
   =
   \prod_{i=1}^{q-1}
   \frac{i(n+t+i)}{(t+i)(n-t+i)}.
\]
Thus \(P_{\mathrm{proj}}\) is stronger than \(K\) precisely when this product is
at least \(1\). For fixed \(q,t\) and \(n\to\infty\), the ratio tends to
\(1/\binom{t+q-1}{q-1}\), so \(K\) is stronger. In contrast, for fixed \(n,t\)
and \(q\to\infty\), the projection bound is stronger by a factor of order
\(q^t\). It also dominates the sphere-packing bound in this large-alphabet
regime.

For instance, when \(n=4\), \(q=20\), and \(t=2\), the projection bound gives
\(210\), while sphere packing gives \(442.75\) and the Kova\v{c}evi\'c--Tan
bound gives approximately \(843.33\). Thus, for these parameters, projection is
stronger than both alternatives.

\subsubsection{Asymptotic comparison of the three bounds}
\label{subsubsec:asymptotic-bound-comparison}

We summarize the parameter regimes in which each of the three bounds above is
most informative. This comparison also explains why none of the bounds should
be discarded.

First, consider the standard fixed-alphabet regime, where \(q\) and \(t\) are
fixed and \(n\to\infty\). Then
\[
   K(n,q,t)
   =
   \frac{1}{\binom{t+q-1}{q-1}}
   \binom{n+q-1}{q-1}(1+o(1)),
\]
whereas
\[
   P_{\mathrm{sp}}(n,q,t)
   =
   \frac{1}{\binom{\lfloor t/2\rfloor+q-1}{q-1}}
   \binom{n+q-1}{q-1},
\]
and
\[
   P_{\mathrm{proj}}(n,q,t)
   =
   \binom{n-t+q-1}{q-1}
   =
   \binom{n+q-1}{q-1}(1+o(1)).
\]
Hence, in this regime, the explicit Kova\v{c}evi\'c--Tan bound \(K(n,q,t)\) is
the strongest of the three bounds considered here.

Second, consider the large-alphabet regime, where \(n\) and \(t\) are fixed and
\(q\to\infty\). Then
\[
   K(n,q,t)=\Theta_{n,t}(q^n),
\]
while, writing \(\rho=\lfloor t/2\rfloor\),
\[
   P_{\mathrm{sp}}(n,q,t)=\Theta_{n,t}(q^{n-\rho})
   \qquad\text{and}\qquad
   P_{\mathrm{proj}}(n,q,t)=\Theta_{n,t}(q^{n-t}).
\]
Since \(t\ge\rho\), the projection bound is the strongest in this
large-alphabet regime. This is the regime illustrated by the example
\((n,q,t)=(4,20,2)\) above.

Finally, in high-deletion regimes, where \(t\) is close to \(n\), the three
bounds can be much closer to one another. For example, if \(q\) is fixed and
\(t=n-k\) with \(k\) fixed, then
\[
\begin{aligned}
   K(n,q,n-k) &= O_q(1),\\
   P_{\mathrm{sp}}(n,q,n-k) &= O_q(1),\\
   P_{\mathrm{proj}}(n,q,n-k) &= \binom{k+q-1}{q-1}.
\end{aligned}
\]
Thus all three general bounds are of constant order in \(n\), and finite
parameter effects may determine which one is strongest. The example
\((n,q,t)=(14,5,12)\) above shows that the sphere-packing bound can be the
strongest of the three. In the following subsections we obtain sharper
specialized bounds for several extremal cases.

\subsubsection{Comparison with the cyclic construction}
\label{subsubsec:bounds-cyclic-comparison}

We now compare the preceding upper bounds with the cyclic construction of
Section~\ref{sec:cyclic}. Recall that Construction~\ref{const:cyclic} gives a
code of size at least
\[
   L_{\mathrm{cyc}}(n,q,t)
   \eqdef
   \frac{|\cS_{n,q}|}{m}
   =
   \frac{1}{t(t+1)^{q-2}+1}
   \binom{n+q-1}{q-1},
\]
where \(m=t(t+1)^{q-2}+1\). Thus each upper bound \(U(n,q,t)\) gives a
multiplicative gap
\[
   \Gamma_U(n,q,t)\eqdef \frac{U(n,q,t)}{L_{\mathrm{cyc}}(n,q,t)}.
\]
For the sphere-packing bound,
\[
   \Gamma_{\mathrm{sp}}=
   \frac{m}{\binom{\lfloor t/2\rfloor+q-1}{q-1}}.
\]
For the projection bound,
\[
   \Gamma_{\mathrm{proj}}=
   m\,
   \frac{\binom{n-t+q-1}{q-1}}
        {\binom{n+q-1}{q-1}}.
\]
For the explicit Kova\v{c}evi\'c--Tan bound \(K(n,q,t)\),
\[
   \Gamma_{\mathrm{KT}}=
   m\,
   \frac{\binom{n+t+q-1}{q-1}}
        {\binom{t+q-1}{q-1}\binom{n+q-1}{q-1}}.
\]

The tightest upper bound against which to compare the cyclic construction is
therefore
\[
   U_{\min}(n,q,t)=
   \min\{K(n,q,t),P_{\mathrm{sp}}(n,q,t),P_{\mathrm{proj}}(n,q,t)\},
\]
with corresponding gap \(\Gamma_{\min}=U_{\min}/L_{\mathrm{cyc}}\).

In the fixed-\(q,t\), \(n\to\infty\) regime, \(K(n,q,t)\) is the tightest of the
three bounds considered here. Hence
\[
   \Gamma_{\min}(n,q,t)
   =
   \Gamma_{\mathrm{KT}}(n,q,t)(1+o(1))
   \longrightarrow
   \frac{m}{\binom{t+q-1}{q-1}}.
\]
After taking the limit \(n\to\infty\), we may further examine the dependence of
this asymptotic gap on \(t\). For fixed \(q\) and \(t\to\infty\),
\[
   \frac{m}{\binom{t+q-1}{q-1}}
   =
   \frac{t(t+1)^{q-2}+1}{\binom{t+q-1}{q-1}}
   =
   (q-1)!+o(1).
\]
Thus, in the fixed-alphabet regime, the cyclic construction is within a
constant factor, depending only on \(q\), of the strongest of the three general
bounds considered here.

In the large-alphabet regime, where \(n\) and \(t\) are fixed and
\(q\to\infty\), the projection bound is the tightest of the three. In this case,
\[
   \Gamma_{\min}(n,q,t)=\Gamma_{\mathrm{proj}}(n,q,t)(1+o(1)).
\]
This regime is less favorable for the cyclic construction, since
\(m=t(t+1)^{q-2}+1\) grows exponentially in \(q\) for fixed \(t\ge2\). This is
consistent with the comparison in Section~\ref{subsec:cyclic-comparison}:
finite-field and power-sum constructions are better suited to fixed deletion
radius and growing alphabet size.

In high-deletion regimes, especially when \(t=n-k\) with small \(k\), the
general bounds may all be of constant order in \(n\), and the best comparison
can depend on the exact parameters. In those regimes, the specialized extremal
bounds proved below are often sharper than the general bounds above.
Consequently, the cyclic construction should be viewed as a universal explicit
construction whose strongest regime is fixed alphabet size, particularly when
\(t\) grows and finite-field Sidon constructions are less natural.

\subsection{The case \texorpdfstring{$t=n-1$}{t=n-1}}
\label{subsec:t-n-1}

We now consider extremal deletion regimes, where the number of surviving symbols
is fixed and small. Equivalently, \(t=n-k\) for a fixed output size \(k\). In
this regime, an \(S_q[n,M,n-k]\) code must satisfy
\[
   |S\cap T|\le k-1
   \qquad
   \text{for all distinct } S,T\in\cC.
\]
Indeed, two codewords sharing a submultiset of size \(k\) could both produce that
same received word after \(n-k\) deletions.

\begin{proposition}[Optimal codes for \texorpdfstring{$t=n-1$}{t=n-1}]
\label{prop:t-n-1}
For all \(q\ge2\) and \(n\ge1\),
\[
   S_q(n,n-1)=q.
\]
\end{proposition}

\begin{proof}
If \(S,T\in\cC\) share a symbol \(a\), then both can produce the received
multiset \(\{a\}\). Hence distinct codewords must have disjoint supports, so
there are at most \(q\) codewords. The constant-word code
\(\{\{a^n\}:a\in\Sigma\}\) has size \(q\) and attains the bound.
\end{proof}

\begin{remark}
Under the equivalence with constant-weight codes in the \(\ell_1\)-metric,
Proposition~\ref{prop:t-n-1} corresponds to the elementary extremal case
\(A(q,n,2n)=q\); see, for example, related small-weight extremal results in
\cite{ChenMaZhang2021}.
\end{remark}

\subsection{The case \texorpdfstring{$t=n-2$}{t=n-2}}
\label{subsec:t-n-2}

When \(t=n-2\), the received multiset has size \(2\), and the condition is
\(|S\cap T|\le1\) for distinct codewords. Let \(\mathsf I(q,n)\) denote the
maximum size of a family \(\mathcal F\subseteq\binom{\Sigma}{n}\) of ordinary
\(n\)-element subsets of \(\Sigma\) such that \(|A\cap B|\le1\) for all distinct
\(A,B\in\mathcal F\). We set \(\mathsf I(q,n)=0\) when \(n>q\).

\begin{theorem}[Reduction to an ordinary intersection problem]
\label{thm:t-n-2-intersection}
For every \(q\ge2\) and \(n\ge2\),
\[
   S_q(n,n-2)=q+\mathsf I(q,n).
\]
\end{theorem}

\begin{proof}
Let \(\cC\) be an \(S_q[n,M,n-2]\) code. Partition \(\cC\) into codewords with a
repeated symbol and squarefree codewords. For every repeated codeword \(S\),
choose a symbol \(a(S)\) with multiplicity at least \(2\). If two codewords had
the same chosen symbol, their intersection would have size at least \(2\), a
contradiction. Thus there are at most \(q\) repeated codewords. The squarefree
codewords form a family of \(n\)-subsets with pairwise intersections at most
\(1\), so there are at most \(\mathsf I(q,n)\) of them.

Conversely, take all \(q\) constant words \(\{a^n\}\), and add a family of
\(\mathsf I(q,n)\) squarefree \(n\)-subsets with pairwise intersection at most
\(1\). Every two words in the resulting code have intersection at most \(1\).
\end{proof}

\begin{corollary}[Large blocklength relative to the alphabet]
\label{cor:t-n-2-large-n}
If \(n\ge q+1\), then
\[
   S_q(n,n-2)=q.
\]
\end{corollary}

\begin{example}
\label{ex:q4-n3-tn2}
Let \(q=4\), \(n=3\), and \(t=1=n-2\). Since any two \(3\)-subsets of a
\(4\)-element set intersect in at least \(2\) elements, \(\mathsf I(4,3)=1\).
Thus \(S_4(3,1)=5\). For example,
\[
   \cC=\{000,\,011,\,022,\,033,\,123\}
\]
has size \(5\) and pairwise intersections of size at most \(1\).
\end{example}

\subsection{The case \texorpdfstring{$t=n-3$}{t=n-3}}
\label{subsec:t-n-3}

When \(t=n-3\), the received multiset has size \(3\), and the condition is
\(|S\cap T|\le2\) for distinct codewords.

\begin{proposition}[A certificate bound for large blocklength]
\label{prop:t-n-3-large-n}
If \(n\ge q+2\), then
\[
   S_q(n,n-3)\le q+\binom q2.
\]
\end{proposition}

\begin{proof}
Every \(S\in\cS_{n,q}\) admits either a singleton certificate \(\{a\}\) with
\(S(a)\ge3\), or a pair certificate \(\{a,b\}\) with \(a\ne b\) and
\(S(a),S(b)\ge2\). Otherwise, the total size of \(S\) is at most
\(2+(q-1)=q+1\), contradicting \(n\ge q+2\). Two codewords cannot share the same
certificate, since their intersection would then have size at least \(3\) in
the singleton case and at least \(4\) in the pair case. There are
\(q+\binom q2\) possible certificates.
\end{proof}

We now turn to Reiman's incidence method. In the regime \(n\ge q+2\), the same
case-splitting Reiman argument can also be applied and gives
\[
   M\le \frac{q^2(n-2)}{(n-1)-q}.
\]
However, this estimate is weaker than Proposition~\ref{prop:t-n-3-large-n},
because for \(n\ge q+2\),
\[
   q+\binom q2=\frac{q(q+1)}2
   \le
   \frac{q^2(n-2)}{(n-1)-q}.
\]
Therefore, in the large-blocklength regime, we use the certificate bound. The
Reiman-type argument is useful mainly in the complementary regime \(n\le q+1\).

\begin{example}
\label{ex:certificate-fails}
The certificate argument may fail when \(n\le q+1\). For example, if \(q=4\)
and \(n=5=q+1\), the multiset \(00123\) has no symbol of multiplicity at least
\(3\), and it does not have two distinct symbols each appearing with
multiplicity at least \(2\).
\end{example}

We shall use the following standard form of Reiman's inequality for bipartite
graphs of girth at least \(6\)~\cite{Reiman1958,Neuwirth2001}. If
\(G=(X\cup Y,E)\) is bipartite, has no \(4\)-cycles, and \(|Y|\le |X|\), then
\[
   |E|^2-|X|\,|E|-|X|\,|Y|(|Y|-1)\le0.
\]

\begin{theorem}[A Reiman-type bound for the complementary regime]
\label{thm:t-n-3-reiman}
Let \(q\ge2\), \(n\ge3\), and let \(\cC\subseteq\cS_{n,q}\) be an
\(S_q[n,M,n-3]\) code. If \(n\le q+1\), then
\[
   M\le \frac{q^2\bigl(q(n-1)-1\bigr)}{n-2}.
\]
\end{theorem}

\begin{proof}
Let \(U=\Sigma\times\{1,\dots,n\}\), where \((a,j)\) represents the \(j\)-th
copy of symbol \(a\). For a fixed \(u=(a,j)\in U\), let
\(\cC(u)=\{S\in\cC:S(a)\ge j\}\), and set \(k(u)=|\cC(u)|\). For
\(S\in\cC(u)\), define \(S^{(u)}=S\setminus\{a\}\). Then the family
\(\cF(u)=\{S^{(u)}:S\in\cC(u)\}\subseteq\cS_{n-1,q}\) has pairwise
intersections of size at most \(1\).

Construct the incidence graph \(G_u=(X_u\cup Y_u,E_u)\) with
\(X_u=\Sigma\times\{1,\dots,n-1\}\) and \(Y_u=\cF(u)\). Then
\(|X_u|=q(n-1)\), \(|Y_u|=k(u)\), every vertex in \(Y_u\) has degree \(n-1\),
and the graph has no \(4\)-cycles. Put \(N=q(n-1)\), \(d=n-1\), and
\(k=k(u)\). If \(k\le N\), then \(k\le N\). If \(k>N\), apply Reiman after
interchanging the two sides, obtaining
\[
   (kd)^2-k(kd)-Nk(N-1)\le0,
\]
and hence
\[
   k\le \frac{N(N-1)}{d(d-1)}.
\]
Since \(n\le q+1\), we have \(q\ge n-1\), so the latter expression is at least
\(N\). Thus, in both cases,
\[
   k(u)
   \le
   \frac{q(n-1)\bigl(q(n-1)-1\bigr)}{(n-1)(n-2)}.
\]
Finally, double-counting incidences gives \(Mn=\sum_{u\in U}k(u)\), and since
\(|U|=qn\),
\[
   M
   \le
   q\cdot
   \frac{q(n-1)\bigl(q(n-1)-1\bigr)}{(n-1)(n-2)}
   =
   \frac{q^2\bigl(q(n-1)-1\bigr)}{n-2}.
\]
\end{proof}

\begin{remark}
An iterated puncturing argument also gives the elementary bound
\(S_q(n,n-k)\le q^k\). However, this bound is always dominated by the projection
bound
\[
   S_q(n,n-k)\le |\cS_{k,q}|=\binom{k+q-1}{q-1},
\]
since \(\binom{k+q-1}{q-1}\le q^k\). We therefore do not use the recursive
puncturing bound in the sequel.
\end{remark}

\section{Elementary Benchmark Cases}
\label{sec:elementary-benchmarks}

We close this part of the paper with elementary benchmark cases. The purpose of
this section is not to present the main construction of the paper, but rather to
clarify simple regimes that calibrate the multiset deletion model.

We begin with the standard \(q\)-ary sum-modulo construction for a single
deletion. In the multiset model, a single deletion can be corrected by
identifying the deleted symbol, without locating its position. This gives a
constant-redundancy construction. We prove that the sum-modulo syndrome classes
are explicitly balanced. Combining this balance with the
Kova\v{c}evi\'c--Tan upper bound for one deletion shows that, for fixed \(q\),
no single-deletion code can improve the leading term of this construction.

We then record the binary benchmark. Over the binary alphabet, the problem is
one-dimensional and the natural congruence construction is closely related to
the perfect-code phenomenon in the discrete simplex; see Kova\v{c}evi\'c and
Vukobratovi\'c~\cite[Proposition~3.1]{KovacevicVukobratovic2015}. We include
the short deletion-language derivation in order to make the exact optimal value
explicit in our notation.

Finally, we discuss exact optimal single-deletion benchmark cases. For the
ternary and quaternary alphabets, the single-deletion problem is equivalent to
maximum independent sets in triangular and tetrahedral grid graphs,
respectively. Geramita, Gregory, and Roberts computed the corresponding
independence numbers, and translating their formulas gives exact values for
\(S_3(n,1)\) and \(S_4(n,1)\). Machacek subsequently determined uniqueness in
two infinite subfamilies, which translates here into uniqueness of the optimal
multiset code. These examples also show that the natural sum-modulo
construction is not always finite-length optimal: for \(q=3\) there are two
exceptional blocklengths, and for \(q=4\) a finite-field additive construction
over \((\mathbb F_4,+)\) is strictly better than the cyclic sum modulo \(4\)
for even blocklengths.

We end with the first open case \(q=5\). Exact computations suggest that the
sum-modulo construction may again be optimal except for small exceptional
blocklengths. This motivates conjectures for \(q=5\) and, more generally, for
prime alphabets.

\subsection{Single deletion over a \texorpdfstring{$q$}{q}-ary alphabet}
\label{subsec:qary-single-benchmark}

For \(a\in\mathbb Z_q\), define
\[
   \sigma(S)=\sum_{s\in S}s\pmod q,
   \qquad
   \cC_{\mathrm{sum}}(a)=\{S\in\cS_{n,q}:\sigma(S)\equiv a\pmod q\}.
\]
If \(S'=S\setminus\{s\}\) is received, then
\[
   s\equiv a-\sigma(S')\pmod q,
\]
so the deleted symbol is uniquely determined. Hence
\(\cC_{\mathrm{sum}}(a)\) corrects one deletion.

This is a single additive-syndrome construction over the cyclic group
\(\mathbb Z_q\), with the alphabet symbols labeled by
\[
   0,1,\dots,q-1\in\mathbb Z_q.
\]
For \(q=3\), this construction coincides, up to relabeling of the alphabet and
of the residue, with Construction~\ref{const:cyclic} for \(t=1\). Indeed, in
that case the cyclic construction has
\[
   m=3,
   \qquad
   f(0)=1,\quad f(1)=2,\quad f(2)=0
   \pmod 3,
\]
which is the same set of labels as \(\mathbb Z_3\).

This single-deletion construction is also related to the Varshamov power-sum
construction discussed in Section~\ref{subsec:cyclic-comparison}, in the
paragraph on Varshamov power sums. For \(t=1\), the
finite-field version of that construction reduces to a single additive syndrome
\[
   \sum_i x_i\alpha_i=\beta,
\]
where the alphabet symbols are assigned distinct labels \(\alpha_i\) in a
finite field. Thus, when \(q\) is prime, the sum-modulo construction is
precisely the \(t=1\) finite-field power-sum construction over \(\mathbb F_q\),
with labels \(\alpha_i=i\). For nonprime \(q\), however, the cyclic group
\(\mathbb Z_q\) need not be the additive group of a field. This distinction is
important already for \(q=4\), as discussed below.

Since the \(q\) residue classes partition \(\cS_{n,q}\), there exists a residue
\(a^\star\) such that
\[
   |\cC_{\mathrm{sum}}(a^\star)|
   \ge
   \frac{1}{q}\binom{n+q-1}{q-1}.
\]
Thus its redundancy is at most \(1\).

We next record a sharper statement: the syndrome classes of the sum-modulo
construction can be counted explicitly. The proof is a roots-of-unity filter,
analogous to the proof of Proposition~\ref{prop:cyclic-balance}; for completeness
we give it in Appendix~\ref{app:sum-modulo-balance}.

\begin{proposition}[Balance of the sum-modulo classes]
\label{prop:sum-modulo-balance}
For \(a\in\mathbb Z_q\), let
\[
   N_a^{(q)}(n)=|\cC_{\mathrm{sum}}(a)|.
\]
Then
\[
   N_a^{(q)}(n)
   =
   \frac1q\binom{n+q-1}{q-1}
   +
   O_q(n^{d(q)-1}),
\]
where \(d(q)\) is the largest proper divisor of \(q\). In particular,
\(d(q)\le q/2\), so
\[
   N_a^{(q)}(n)
   =
   \frac1q\binom{n+q-1}{q-1}
   +
   O_q(n^{q/2-1}).
\]

If \(q\) is prime, then the formula is exact:
\[
   N_a^{(q)}(n)
   =
   \begin{cases}
      \displaystyle
      \frac1q\binom{n+q-1}{q-1},
         & q\nmid n,\\[8pt]
      \displaystyle
      \frac1q\left(\binom{n+q-1}{q-1}+q-1\right),
         & q\mid n,\ a=0,\\[8pt]
      \displaystyle
      \frac1q\left(\binom{n+q-1}{q-1}-1\right),
         & q\mid n,\ a\ne0.
   \end{cases}
\]
Consequently, for prime \(q\),
\[
   \max_{a\in\mathbb F_q}N_a^{(q)}(n)
   =
   \left\lceil
      \frac1q\binom{n+q-1}{q-1}
   \right\rceil.
\]
\end{proposition}

We now relate this construction to the general upper bounds. For one deletion,
the Kova\v{c}evi\'c--Tan bound recalled in Section~\ref{sec:bounds} gives the
matching leading term. Indeed,
\[
   K(n,q,1)
   =
   \frac{1}{q}\binom{n+q}{q-1}
   =
   \frac1q\binom{n+q-1}{q-1}+O_q(n^{q-2}).
\]
This is substantially stronger than the ordinary projection bound in the
single-deletion case: projection gives
\[
   S_q(n,1)\le |\cS_{n-1,q}|=\binom{n+q-2}{q-1},
\]
whereas the Kova\v{c}evi\'c--Tan bound improves the leading constant by a factor
of \(q\).

\begin{corollary}[Asymptotic optimality for single deletion]
\label{cor:single-deletion-asymptotic-upper}
For every fixed \(q\),
\[
   S_q(n,1)
   \le
   \frac1q\binom{n+q-1}{q-1}
   +
   O_q(n^{q-2}).
\]
Consequently, the sum-modulo construction is asymptotically optimal for fixed
\(q\) and \(n\to\infty\).
\end{corollary}

\begin{proof}
The upper bound is the \(t=1\) specialization of the
Kova\v{c}evi\'c--Tan bound \(K(n,q,t)\). The lower bound follows from the
sum-modulo construction, since at least one residue class has size at least
\(|\cS_{n,q}|/q\). These two bounds have the same leading term.
\end{proof}

\paragraph{Finite-length upper bounds.}
The preceding corollary uses the Kova\v{c}evi\'c--Tan bound only to obtain the
correct leading asymptotic term. It should not be interpreted as a
finite-length optimality statement, nor as a claim that this bound is always
the sharpest available upper bound for \(t=1\).

For finite blocklengths, one can sometimes obtain sharper bounds by counting
one-deletion shadows more carefully. For a word \(S\in\cS_{n,q}\), define its
one-deletion shadow by
\[
   \partial_1(S)=\{S\setminus\{x\}:x\in\supp(S)\}.
\]
Thus \(\partial_1(S)\subseteq\cS_{n-1,q}\), and
\[
   |\partial_1(S)|=|\supp(S)|.
\]
If \(\cC\) corrects one deletion, then the shadows
\(\partial_1(S)\), \(S\in\cC\), are pairwise disjoint.

This viewpoint is closely related to the deletion-side case of the general
upper bound of Kova\v{c}evi\'c and Tan~\cite[Theorem~17, Eq.~(30)]{KovacevicTan2018}.
Specializing their theorem to \(h=1\) and \(r=1\) gives, for every threshold
\(1\le \ell\le q\),
\[
   S_q(n,1)
   \le
   \frac{|\cS_{n-1,q}|}{\ell}
   +
   \sum_{i=1}^{\ell-1}\binom qi\binom{n-1}{i-1}.
\]
This threshold bound separates codewords whose support has size less than
\(\ell\) from the remaining codewords, and then charges each remaining codeword
only \(\ell\) one-deletion outputs. In particular, it does not distinguish
between support sizes \(\ell,\ell+1,\dots,q\). The following bound is the
corresponding support-stratified refinement: instead of choosing a single
threshold \(\ell\), it keeps track of the exact support size of each codeword,
so that a word of support \(r\) is charged exactly \(r\) one-deletion outputs.

\begin{proposition}[Support-shadow upper bound for one deletion]
\label{prop:support-shadow-single}
Let
\[
   B=\binom{n+q-2}{q-1}=|\cS_{n-1,q}|
\]
and, for \(1\le r\le q\), let
\[
   A_r=\binom qr\binom{n-1}{r-1}
\]
be the number of words in \(\cS_{n,q}\) with support size exactly \(r\).
Let \(s\) be the first integer such that
\[
   \sum_{r=1}^{s} rA_r>B.
\]
Then
\[
   S_q(n,1)
   \le
   \sum_{r=1}^{s-1}A_r
   +
   \left\lfloor
      \frac{B-\sum_{r=1}^{s-1}rA_r}{s}
   \right\rfloor.
\]
\end{proposition}

\begin{proof}
See Appendix~\ref{app:support-shadow-single}.
\end{proof}

The support-shadow bound is not meant to replace the Kova\v{c}evi\'c--Tan
insertion-side bound in the fixed-alphabet asymptotic regime, but it can be
sharper at finite blocklengths and in large-alphabet regimes. For example,
when \(q=3\) and \(n=4\), Proposition~\ref{prop:support-shadow-single} gives
\[
   S_3(4,1)\le 6.
\]
For comparison, the insertion-side bound \(K(n,q,t)\) gives
\[
   K(4,3,1)=7,
\]
whereas the deletion-side threshold bound of
Kova\v{c}evi\'c--Tan, optimized over the threshold \(\ell\), gives \(8\). Thus the support-stratified bound is sharper than both of these
Kova\v{c}evi\'c--Tan bounds for these parameters. The value \(6\) is tight, as
shown below in Proposition~\ref{prop:ternary-single-optimal}.

Similarly, for \(q=5\) and \(n=4\), the insertion-side bound gives
\[
   K(4,5,1)=\frac15\binom94=25.2,
\]
and hence \(M\le25\), while the deletion-side threshold bound of
Kova\v{c}evi\'c--Tan, optimized over \(\ell\), gives \(M\le22\). Proposition~\ref{prop:support-shadow-single} gives the stronger
bound
\[
   S_5(4,1)\le20.
\]

The improvement is not isolated. For \(n=3\) and \(q\ge3\), the support-shadow
bound gives
\[
   S_q(3,1)
   \le
   q+\left\lfloor\frac{q(q-1)}4\right\rfloor,
\]
whereas the deletion-side threshold bound with \(\ell=2\) gives
\[
   q+\left\lfloor\frac{q(q+1)}4\right\rfloor
\]
after taking integrality into account. Thus, in this infinite family, the
support-stratified shadow count is strictly sharper. In this same family, the
insertion-side bound \(K(3,q,1)\) is of order \(q^3\), whereas the
support-shadow bound above is of order \(q^2\).

The gap between the Kova\v{c}evi\'c--Tan upper bound and the largest
sum-modulo class is of order \(n^{q-2}\), whereas
Proposition~\ref{prop:sum-modulo-balance} shows that the sum-modulo classes
themselves are much more evenly distributed: their imbalance is at most
\(O_q(n^{q/2-1})\), and for prime \(q\) it is only \(O_q(1)\). Determining the
exact finite-length optimum therefore requires additional structure, which is
available for \(q=3,4\) through triangular and tetrahedral grid graphs.

The rest of this section examines how close the sum-modulo construction is to
being finite-length optimal. We will see that in some small alphabets it is
optimal for most or all blocklengths, but not always. In the quaternary case,
working over the additive group of \(\mathbb F_4\) gives a better construction
than the cyclic sum modulo \(4\) for even \(n\). For \(q=5\), computations
suggest that the prime-field sum-modulo construction may become optimal after
small exceptional blocklengths, but we do not have a proof.

\subsection{Optimal binary multiset deletion codes}
\label{subsec:binary-benchmark}

For \(q=2\), every multiset \(S\in\cS_{n,2}\) is uniquely determined by its
weight \(w(S)\in\{0,1,\dots,n\}\). Moreover, \(d(S,T)=|w(S)-w(T)|\). Thus the
binary multiset space is the integer interval \(\{0,1,\dots,n\}\) with the usual
distance.

For \(a\in\{0,1,\dots,t\}\), define
\[
   \cC_2(a)=\{S\in\cS_{n,2}:w(S)\equiv a\pmod{t+1}\}.
\]
This code corrects up to \(t\) deletions. If \(S'\) is obtained from \(S\) by
deleting at most \(t\) symbols, and \(u\) is the number of deleted ones, then
\(w(S')=w(S)-u\) and \(0\le u\le t\). Since \(w(S)\equiv a\pmod{t+1}\), the
congruence \(u\equiv a-w(S')\pmod{t+1}\) uniquely determines \(u\), and hence
recovers \(S\).

The largest congruence class has size \(\lceil(n+1)/(t+1)\rceil\), and this is
optimal. Indeed, if \(0\le w_1<\cdots<w_M\le n\) are the weights of any binary
\(t\)-deletion-correcting code, then \(w_{i+1}-w_i\ge t+1\), so
\[
   n\ge w_M-w_1\ge (M-1)(t+1).
\]
Thus
\[
   S_2(n,t)=\left\lceil\frac{n+1}{t+1}\right\rceil.
\]
This one-dimensional congruence construction is the binary instance of the
perfect-code phenomenon in the discrete simplex discussed by Kova\v{c}evi\'c
and Vukobratovi\'c~\cite[Proposition~3.1]{KovacevicVukobratovic2015}. We state
and prove the exact optimum here in deletion-distance notation to keep the
benchmark self-contained and to make clear that no larger binary multiset
\(t\)-deletion-correcting code exists.

\subsection{Optimal ternary single-deletion codes}
\label{subsec:ternary-single-optimal}

We now specialize the preceding single-deletion discussion to the ternary
alphabet. Geramita, Gregory, and Roberts computed the relevant triangular-grid
independence numbers. We translate their result into multiset-deletion notation
and then use Machacek's theorem to identify when the resulting optimal code is
unique.

Let \(G_3(n)\) be the graph whose vertices are the triples
\[
   (x_0,x_1,x_2)\in\mathbb Z_{\ge0}^3,
   \qquad x_0+x_1+x_2=n,
\]
and where two vertices are adjacent if their ordinary \(\ell_1\)-distance is
\(2\). Equivalently, two vertices are adjacent if one can be obtained from the
other by moving one unit from one coordinate to another. This is the triangular
grid graph. A ternary multiset code corrects one deletion if and only if its
multiplicity vectors form an independent set in \(G_3(n)\), because one-deletion
correction is equivalent to deletion distance at least \(2\), or equivalently
ordinary \(\ell_1\)-distance at least \(4\).

Geramita, Gregory, and Roberts computed the independence number of these
triangular grid graphs; in the notation above,
\[
   \alpha(G_3(n))=
   \left\lceil\frac{(n+1)(n+2)}{6}\right\rceil
\]
for all \(n\notin\{2,4\}\)~\cite[Theorem~5.4(2)]{GeramitaGregoryRoberts1986}.
Machacek uses the same graph formulation and studies the number of maximum
independent sets~\cite{Machacek2021}. This gives the following exact value.

\begin{proposition}[Optimal ternary single-deletion codes]
\label{prop:ternary-single-optimal}
For every \(n\ge1\),
\[
   S_3(n,1)=
   \begin{cases}
      3, & n=2,\\[2pt]
      6, & n=4,\\[2pt]
      \left\lceil\dfrac{(n+1)(n+2)}{6}\right\rceil,
         & n\notin\{2,4\}.
   \end{cases}
\]
\end{proposition}

\begin{proof}
By Claim~\ref{clm:distance-vs-deletions}, a ternary single-deletion code is a
set of vertices in \(G_3(n)\) with no adjacent pair, hence an independent set.
Conversely, any independent set in \(G_3(n)\) has pairwise deletion distance at
least \(2\), and therefore corrects one deletion. Thus
\[
   S_3(n,1)=\alpha(G_3(n)).
\]
The stated formula follows from the known independence-number formula for
triangular grid graphs. The exceptional values
\[
   \alpha(G_3(2))=3,
   \qquad
   \alpha(G_3(4))=6
\]
are also part of the known triangular-grid result; they can also be verified
directly by checking the finite graphs \(G_3(2)\) and \(G_3(4)\). For instance,
the three constant words attain \(S_3(2,1)=3\), and the code displayed in
Section~\ref{subsec:qary-single-benchmark} attains \(S_3(4,1)=6\).
\end{proof}

\begin{remark}[Comparison with the sum-modulo, cyclic, and power-sum constructions]
\label{rem:ternary-single-sum-comparison}
For \(q=3\), Proposition~\ref{prop:sum-modulo-balance} gives
\[
   \max_a |\cC_{\mathrm{sum}}(a)|
   =
   \left\lceil\frac{|\cS_{n,3}|}{3}\right\rceil
   =
   \left\lceil\frac{(n+1)(n+2)}{6}\right\rceil.
\]
Therefore the ternary sum-modulo construction is optimal for every
\(n\notin\{2,4\}\). However, it is not optimal for the two exceptional
blocklengths: for \(n=2\) its largest class has size \(2\), while the optimum is
\(3\); for \(n=4\) its largest class has size \(5\), while the optimum is \(6\).

For \(q=3\), the three viewpoints coincide. The sum-modulo construction over
\(\mathbb Z_3\) is the \(t=1\) finite-field power-sum construction over
\(\mathbb F_3\), and it also coincides, up to relabeling, with
Construction~\ref{const:cyclic} for \(t=1\). Thus the exceptional blocklengths
\(n=2,4\) show that even this natural one-syndrome construction need not be
finite-length optimal for all parameters.
\end{remark}

\begin{corollary}[Uniqueness in the ternary case]
\label{cor:ternary-single-unique}
Let \(n\ge1\) satisfy \(3\mid n\) and \(n\ne6\). Then
\(\cC_{\mathrm{sum}}(0)\) is the unique optimal ternary
single-deletion-correcting code in \(\cS_{n,3}\). At \(n=6\), the same
zero-syndrome class is optimal, but the optimum is not unique.
\end{corollary}

\begin{proof}
When \(3\mid n\), Proposition~\ref{prop:sum-modulo-balance} gives
\[
   |\cC_{\mathrm{sum}}(0)|
   =
   \frac13\left(\binom{n+2}{2}+2\right)
   =
   \left\lceil\frac{(n+1)(n+2)}6\right\rceil.
\]
Thus Proposition~\ref{prop:ternary-single-optimal} shows that this class is
optimal. Machacek proved that \(G_3(n)\) has exactly one maximum independent
set whenever \(3\mid n\) and \(n\ne6\)~\cite[Theorem~2.1]{Machacek2021}.
Under the equivalence between ternary single-deletion codes and independent
sets in \(G_3(n)\), this is precisely the asserted uniqueness of the optimal
code. For \(n=6\), Machacek displays the two maximum independent sets of
\(G_3(6)\), so uniqueness fails, while the size calculation above still shows
that \(\cC_{\mathrm{sum}}(0)\) is optimal.
\end{proof}

\subsection{Optimal quaternary single-deletion codes}
\label{subsec:quaternary-single-optimal}

We now treat the quaternary single-deletion case. The construction used here is
the \(t=1\) finite-field version of the Varshamov power-sum construction
discussed in Section~\ref{subsec:cyclic-comparison}, in the paragraph on
Varshamov power sums.

Let
\[
   \mathbb F_4=\{0,1,\alpha,\alpha+1\}.
\]
Label the four alphabet symbols by the four elements of \(\mathbb F_4\). For
\(\beta\in\mathbb F_4\), define
\[
   \cC_{\mathbb F_4}(\beta)
   =
   \left\{
   \begin{array}{@{}l@{}}
      \mathbf x=(x_0,x_1,x_2,x_3)\in\mathbb Z_{\ge0}^4,\\
      x_0+x_1+x_2+x_3=n,\quad
      x_1+\alpha x_2+(\alpha+1)x_3=\beta
   \end{array}
   \right\}.
\]
This is a single additive-syndrome construction over the group
\((\mathbb F_4,+)\). Equivalently, it is the \(t=1\) finite-field Varshamov
power-sum construction with the labels chosen to be all field elements.

If one symbol labeled \(\gamma\in\mathbb F_4\) is deleted, then the syndrome
changes by \(\gamma\). Since the four labels are distinct, the deleted symbol is
uniquely determined. Hence every \(\cC_{\mathbb F_4}(\beta)\) corrects one
deletion.

The quaternary single-deletion problem is equivalent to the independence problem
in the tetrahedral grid graph \(G_4(n)\), whose vertices are the vectors in
\(\mathbb Z_{\ge0}^4\) of coordinate sum \(n\), with edges between pairs at
ordinary \(\ell_1\)-distance \(2\). Thus
\[
   S_4(n,1)=\alpha(G_4(n)).
\]
The independence number of the tetrahedral grid graph was computed by
Geramita, Gregory, and Roberts~\cite[Proposition~5.6 and
Remark~5.7(ii)]{GeramitaGregoryRoberts1986}. Machacek uses the same graph
formulation and studies uniqueness of its maximum independent sets
\cite{Machacek2021}. Translating the independence-number formulas gives the
following exact value.

\begin{proposition}[Optimal quaternary single-deletion codes]
\label{prop:quaternary-single-optimal}
For \(n\ge1\),
\[
   S_4(n,1)=
   \begin{cases}
      \displaystyle
      \frac{(k+1)(k+2)(2k+3)}{6},
         & n=2k+1,\\[8pt]
      \displaystyle
      \binom{k+3}{3}+\binom{k+1}{3},
         & n=2k.
   \end{cases}
\]
Here \(\binom{k+1}{3}\) is interpreted as \(0\) when \(k<2\). Moreover, these
values are attained by the finite-field syndrome construction over
\(\mathbb F_4\) described above.
\end{proposition}

\begin{proof}
As in the ternary case, a quaternary single-deletion-correcting code is exactly
an independent set in \(G_4(n)\). Hence \(S_4(n,1)=\alpha(G_4(n))\), and the
displayed formula is the known value of this independence number.

It remains to see that the \(\mathbb F_4\)-syndrome construction attains these
values. If \(n=2k+1\) is odd, then all four syndrome classes have the same size.
Indeed, by the additive-character filter over \(\mathbb F_4\), the contribution
of a nontrivial additive character \(\chi\) is governed by
\[
   \prod_{\gamma\in\mathbb F_4}\frac{1}{1-z\chi(\gamma)}
   =
   \frac{1}{(1-z^2)^2},
\]
which has no odd-degree terms. Hence the nontrivial character contributions
vanish in odd degree, and each syndrome class has size
\[
   \frac14|\cS_{2k+1,4}|
   =
   \frac14\binom{2k+4}{3}
   =
   \frac{(k+1)(k+2)(2k+3)}{6}.
\]

Now let \(n=2k\) be even. The zero-syndrome class consists of all vectors
\((x_0,x_1,x_2,x_3)\) satisfying
\[
   x_1+x_3\equiv0\pmod2,
   \qquad
   x_2+x_3\equiv0\pmod2.
\]
Since \(x_0+x_1+x_2+x_3\) is even, these two congruences are equivalent to all
four coordinates having the same parity. If all four coordinates are even,
write \(x_i=2y_i\); then \(y_0+y_1+y_2+y_3=k\), giving
\(\binom{k+3}{3}\) vectors. If all four coordinates are odd, write
\(x_i=2y_i+1\); then \(y_0+y_1+y_2+y_3=k-2\), giving
\(\binom{k+1}{3}\) vectors. Therefore
\[
   |\cC_{\mathbb F_4}(0)|
   =
   \binom{k+3}{3}+\binom{k+1}{3},
\]
which matches the known optimum.
\end{proof}

\begin{corollary}[Uniqueness in the quaternary case]
\label{cor:quaternary-single-unique}
If \(n\) is even, then \(\cC_{\mathbb F_4}(0)\) is the unique optimal
quaternary single-deletion-correcting code in \(\cS_{n,4}\). If \(n\) is odd,
then the optimum is not unique: all four classes
\(\cC_{\mathbb F_4}(\beta)\), \(\beta\in\mathbb F_4\), are distinct optimal
codes.
\end{corollary}

\begin{proof}
For even \(n\), Proposition~\ref{prop:quaternary-single-optimal} shows that the
zero-syndrome class attains \(S_4(n,1)\), while Machacek proved that
\(G_4(n)\) has exactly one maximum independent set whenever \(n\) is even
\cite[Theorem~3.1]{Machacek2021}. The coding--graph equivalence therefore
implies that \(\cC_{\mathbb F_4}(0)\) is the unique optimal code. For odd
\(n\), the character calculation in the proof of
Proposition~\ref{prop:quaternary-single-optimal} shows that all four syndrome
classes have size \(S_4(n,1)\); because they are disjoint and nonempty, they
are distinct optimal codes.
\end{proof}

\begin{remark}[Comparison with \(\mathbb Z_4\), the cyclic construction, and power sums]
For \(q=4\), the finite-field construction over \(\mathbb F_4\) should be
distinguished from the sum-modulo construction over \(\mathbb Z_4\). The latter
uses the cyclic-group syndrome
\[
   x_1+2x_2+3x_3\pmod4,
\]
whereas the optimal construction above uses the additive group
\((\mathbb F_4,+)\). The two additive groups are not isomorphic:
\[
   \mathbb Z_4 \not\cong (\mathbb F_4,+)\cong \mathbb Z_2^2.
\]

The \(q=4,t=1\) instance of Construction~\ref{const:cyclic} is different again:
it uses modulus
\[
   m=t(t+1)^{q-2}+1=5
\]
and labels
\[
   1,2,4,0\pmod5.
\]
Thus it is closer to a prime-field one-syndrome construction over
\(\mathbb F_5\) with one field element omitted, and it has five syndrome
classes rather than four. In contrast, the optimal quaternary construction
above uses all four elements of \(\mathbb F_4\) and has syndrome group size
\(4\).

The finite-length difference between the \(\mathbb Z_4\)-sum construction and
the \(\mathbb F_4\)-construction can be quantified exactly. Let
\(M_{\mathbb Z_4}(n)\) be the largest sum-modulo class over \(\mathbb Z_4\), and
let \(M_{\mathbb F_4}(n)\) be the largest finite-field syndrome class over
\(\mathbb F_4\). The exact class-size formula for the \(\mathbb Z_4\)
sum-modulo construction follows from Appendix~\ref{app:sum-modulo-balance}.

For odd \(n\), the two constructions have the same size:
\[
   M_{\mathbb Z_4}(n)=M_{\mathbb F_4}(n)
   =
   \frac14\binom{n+3}{3}.
\]
For even \(n=2k\), the finite-field construction has size
\[
   M_{\mathbb F_4}(2k)
   =
   \binom{k+3}{3}+\binom{k+1}{3},
\]
whereas the \(\mathbb Z_4\)-sum construction has largest class
\[
   M_{\mathbb Z_4}(2k)
   =
   \begin{cases}
      \displaystyle
      \frac14\binom{2k+3}{3}+\frac{k+1}{4},
         & k \text{ odd},\\[8pt]
      \displaystyle
      \frac14\binom{2k+3}{3}+\frac{k+1}{4}+\frac12,
         & k \text{ even}.
   \end{cases}
\]
Consequently,
\[
   M_{\mathbb F_4}(2k)-M_{\mathbb Z_4}(2k)
   =
   \begin{cases}
      \dfrac{k+1}{2}, & k \text{ odd},\\[6pt]
      \dfrac{k}{2}, & k \text{ even},
   \end{cases}
\]
or equivalently
\[
   M_{\mathbb F_4}(n)-M_{\mathbb Z_4}(n)
   =
   \left\lceil\frac{n}{4}\right\rceil
   \qquad (n\ \text{even}).
\]
Thus the finite-field construction is strictly better for every even
blocklength, but the relative gap is small:
\[
   \frac{M_{\mathbb F_4}(n)}{M_{\mathbb Z_4}(n)}
   =
   1+O\!\left(\frac1{n^2}\right).
\]
The advantage is therefore a finite-length effect of order \(n\) codewords,
while both constructions have the same leading asymptotic size
\(\frac14|\cS_{n,4}|\).
\end{remark}

\subsection{The quinary case and prime-alphabet conjectures}
\label{subsec:quinary-and-prime-conjectures}

The quaternary optimality statement should not be interpreted as a general
optimality theorem for finite-field single-syndrome constructions. Already for
\(q=5\), \(n=2\), and \(t=1\), the construction over \(\mathbb F_5\) coincides
with the sum-modulo construction over \(\mathbb Z_5\) and gives syndrome
classes of size
\[
   \frac{|\cS_{2,5}|}{5}
   =
   \frac{15}{5}
   =
   3.
\]
However, the constant-word code
\[
   \{(2,0,0,0,0),\ (0,2,0,0,0),\ \ldots,\ (0,0,0,0,2)\}
\]
has size \(5\) and corrects one deletion. In fact \(S_5(2,1)=5\), since for
\(n=2\) and \(t=1\) any two codewords must have disjoint supports, and there are
only \(5\) symbols.

Thus the finite-field single-syndrome construction is not generally
finite-length optimal, even when \(q\) is prime. Nevertheless, exact
integer-programming computations suggest that the failure may be confined to
small exceptional blocklengths. For \(q=5\), the values obtained for
\(n=1,\dots,8\) are
\[
\begin{array}{c|cccccccc}
n & 1 & 2 & 3 & 4 & 5 & 6 & 7 & 8 \\ \hline
S_5(n,1) & 1 & 5 & 7 & 16 & 26 & 42 & 66 & 99
\end{array}
\]
whereas, by Proposition~\ref{prop:sum-modulo-balance}, the largest
\(\mathbb F_5\)-syndrome class has sizes
\[
\begin{array}{c|cccccccc}
n & 1 & 2 & 3 & 4 & 5 & 6 & 7 & 8 \\ \hline
\left\lceil \frac15\binom{n+4}{4}\right\rceil
  & 1 & 3 & 7 & 14 & 26 & 42 & 66 & 99.
\end{array}
\]
Thus, in these computations, the finite-field syndrome construction is optimal
except at \(n=2\) and \(n=4\).

\begin{conjecture}[The quinary single-deletion case]
\label{conj:q5-single}
For \(q=5\) and \(t=1\),
\[
   S_5(n,1)=
   \begin{cases}
      5, & n=2,\\[2pt]
      16, & n=4,\\[2pt]
      \displaystyle
      \left\lceil \frac15\binom{n+4}{4}\right\rceil,
         & \text{otherwise}.
   \end{cases}
\]
Equivalently, except for the two exceptional blocklengths \(n=2,4\), the
sum-modulo construction, or equivalently the \(t=1\) finite-field Varshamov
power-sum construction over \(\mathbb F_5\), is conjectured to be optimal.
\end{conjecture}

\begin{conjecture}[Eventual optimality for prime alphabets]
\label{conj:prime-eventual-single}
Let \(q\) be prime. Then, for all sufficiently large \(n\),
\[
   S_q(n,1)
   =
   \left\lceil
      \frac1q\binom{n+q-1}{q-1}
   \right\rceil.
\]
Equivalently, for prime alphabets, the sum-modulo construction, or the
\(t=1\) finite-field Varshamov power-sum construction over \(\mathbb F_q\), is
conjectured to be optimal for all sufficiently large blocklengths.

The exceptional cases above show that one cannot expect such a statement for
all \(n\). For instance, \(q=3\) has the exceptional blocklengths \(n=2,4\), and
the computations for \(q=5\) suggest the same exceptional blocklengths. The
computations are consistent with the possibility that, for \(q=5\), the
threshold is \(n\ge5\).
\end{conjecture}

We do not know of an analogue of the triangular- and tetrahedral-grid exact
formulas that determines \(S_q(n,1)\) for all \(q\ge5\). The conjectures above
are therefore intended only as possible finite-length refinements of the known
asymptotic behavior of the corresponding simplicial grid independence problem.

\section{Conclusions}
\label{sec:conc}

We studied deletion-correcting codes in the multiset space \(\cS_{n,q}\), where
a transmitted word is represented only by its symbol multiplicities and deletion
errors reduce these multiplicities. This model is naturally motivated by
permutation channels and unordered storage systems, and it leads to a geometry
that differs significantly from classical sequence-based deletion models.

The main construction in the paper is an explicit cyclic Sidon-type family of
multiset deletion-correcting codes. The construction is based on the
deletion-specific labeling
\[
   1,\ t+1,\ (t+1)^2,\ \dots,\ (t+1)^{q-2},\ 0
   \pmod{t(t+1)^{q-2}+1},
\]
and works for every alphabet size \(q\) and deletion radius \(t\). It gives
redundancy at most \(\log_q(t(t+1)^{q-2}+1)\), and, for fixed \(q\) and \(t\),
supports linear-time online decoding after finite preprocessing. Its main
advantage is not that it improves all known Sidon-type constructions in all
regimes, but rather that it is universal, elementary, deletion-specific, and
particularly efficient when \(q\) is fixed and \(t\) grows.

A central technical point in the cyclic construction is the asymptotic balance
of its syndrome classes. We showed that, for fixed \(q,t\), every fixed residue
class has the same leading-order size, and we proved a sharper balance estimate
using a roots-of-unity expansion and a pole-multiplicity argument. This shows
that optimizing over the residue class is not necessary for the asymptotic
redundancy guarantee: one may choose any convenient residue and still obtain the
same leading number of codewords and redundancy \(\log_q m+o(1)\).

We compared the cyclic construction with several standard Sidon-type
constructions, including Varshamov power-sum constructions, Bose--Chowla type
finite-field constructions, and the construction of Xiao--Zhou. These
constructions are complementary. Finite-field and power-sum methods are stronger
when the alphabet size is large and the deletion radius is fixed, whereas the
cyclic construction has logarithmic dependence on \(t\) for fixed alphabet size
and avoids finite-field parameter restrictions.

On the upper-bound side, we revisited several general bounds for multiset
deletion codes. We recalled the explicit Kova\v{c}evi\'c--Tan bound from
Eq.~(31) and compared it with a corrected sphere-packing bound and a projection
bound. The comparison shows that these bounds are not uniformly ordered. Among
the three general bounds compared here, the explicit Kova\v{c}evi\'c--Tan bound
is strongest in the standard fixed-\(q,t\), \(n\to\infty\) regime, while the
projection and sphere-packing bounds may become stronger in large-alphabet or
high-deletion regimes.

We further analyzed extremal deletion regimes \(t=n-k\), where the received
multiset has fixed small size. In these regimes, deletion correction becomes an
intersection problem for families of multisets. We obtained the exact value
\(S_q(n,n-1)=q\), reduced the case \(t=n-2\) to an ordinary intersection problem
for squarefree codewords, and proved \(S_q(n,n-2)=q\) when \(n\ge q+1\). For
\(t=n-3\), we gave a certificate bound \(S_q(n,n-3)\le q+\binom q2\) for
\(n\ge q+2\), together with a Reiman-type incidence bound for the complementary
large-alphabet regime.

The single-deletion case gives a more refined picture of finite-length
optimality. The natural sum-modulo construction corrects one deletion with a
single additive syndrome, and we proved an exact balance formula for its
syndrome classes. For prime alphabets, the largest sum-modulo class has the
explicit size
\[
   \left\lceil
      \frac1q\binom{n+q-1}{q-1}
   \right\rceil.
\]
Combining this exact balance with the Kova\v{c}evi\'c--Tan upper bound for
\(t=1\) gives
\[
   S_q(n,1)
   \le
   \frac1q\binom{n+q-1}{q-1}
   +
   O_q(n^{q-2}),
\]
showing that the sum-modulo construction is asymptotically optimal for every
fixed \(q\). This improves the ordinary projection bound by a factor \(q\) in
the leading term for \(t=1\).

At finite blocklengths, however, the picture is more subtle. Translating the
independence-number formulas of Geramita, Gregory, and Roberts for triangular
and tetrahedral grid graphs gives exact formulas for \(S_3(n,1)\) and
\(S_4(n,1)\). Machacek's uniqueness theorems sharpen these cardinality results:
for a fixed labeled alphabet, the zero-syndrome ternary sum-modulo class is the
unique optimal code when \(3\mid n\) and \(n\ne6\), and the zero-syndrome
\(\mathbb F_4\) class is the unique optimal quaternary code when \(n\) is even.
For \(q=3\), the sum-modulo construction is optimal except at the exceptional
blocklengths \(n=2,4\). For \(q=4\), the choice of additive group matters: the
finite-field construction over \((\mathbb F_4,+)\) is optimal, while the cyclic
sum modulo \(4\) is strictly smaller for every even blocklength. Thus even in
the single-deletion case, a natural one-syndrome construction may be
asymptotically optimal but not finite-length optimal.

The next case, \(q=5\), remains open. Exact computations for small blocklengths
suggest that the prime-field sum-modulo construction over \(\mathbb F_5\) is
optimal except at \(n=2,4\), where larger nonlinear codes exist. This motivates
the conjecture that, for prime alphabets, the sum-modulo construction is
optimal for all sufficiently large blocklengths. Proving such a result would
require a sharper finite-length upper bound for independent sets in the
corresponding simplicial grid graphs, beyond the asymptotic bound obtained from the Kova\v{c}evi\'c--Tan inequality.

Several problems remain open. The most immediate is to determine the exact
single-deletion values \(S_q(n,1)\) for \(q\ge5\), or at least to prove the
eventual optimality conjecture for prime alphabets. More generally, it would be
valuable to close the gap between explicit constructions and upper bounds for
fixed \(q\) and growing \(t\), and to understand whether deletion-specific
nonlinear constructions can outperform linear Sidon-type syndrome constraints.
Another direction is to develop efficient enumerative encoders for large
syndrome classes, complementing the linear-time decoding procedures studied
here. Finally, a broader comparison between multiset deletion codes and
constant-weight \(\ell_1\)-metric codes may lead to sharper bounds and
constructions across both settings.

\appendix

\section{Proof of the sharper syndrome-class balance estimate}
\label{app:sharp-balance}

\begin{proof}[Proof of Proposition~\ref{prop:cyclic-balance-sharp}]
We use the notation and the roots-of-unity expansion from the proof of
Proposition~\ref{prop:cyclic-balance}. Namely, for \(N_a(n)=|\cC_q(a)|\) and
\(\omega=\exp(2\pi i/m)\), we have
\[
   N_a(n)
   =
   \frac1m
   \sum_{j=0}^{m-1}
   \omega^{-aj}
   [z^n]G_j(z),
\]
where
\[
   G_j(z)=\prod_{i=0}^{q-1}\frac{1}{1-z\omega^{j f(i)}}.
\]
The term \(j=0\) contributes \(m^{-1}\binom{n+q-1}{q-1}\). It remains to bound
the terms with \(j\ne0\).

If \(q=2\), then \(m=t+1\), and the weights are \(f(0)=1\), \(f(1)=0\). Hence,
for every \(j\ne0\),
\[
   G_j(z)=\frac{1}{(1-z\omega^j)(1-z)}.
\]
Since \(\omega^j\ne1\), the two poles are distinct and simple. Therefore
\[
   [z^n]G_j(z)=O_t(1),
\]
which is exactly the claimed bound because
\[
   \left\lceil\frac{q-1}{2}\right\rceil-1=0
   \qquad\text{for }q=2.
\]
We may therefore assume from now on that \(q\ge3\).

Fix \(j\in\{1,\dots,m-1\}\), and set \(b=t+1\). Then the nonzero weights in the
construction are \(1,b,b^2,\dots,b^{q-2}\), and the remaining weight is \(0\).
The poles of \(G_j(z)\) are determined by the values \(\omega^{j f(i)}\). The
order of a pole is the number of indices \(i\) for which these values are equal.

We first observe that the weight \(0\) cannot give the same pole as any of the
weights \(b^i\), \(0\le i\le q-2\). Indeed, if \(\omega^{j b^i}=1\), then
\(m\mid j b^i\). Let \(d=m/\gcd(m,j)\). Since \(j\ne0\pmod m\), we have
\(d>1\). The divisibility \(m\mid j b^i\) implies \(d\mid b^i\). But, since
\(q\ge3\),
\[
   m=t(t+1)^{q-2}+1=t b^{q-2}+1\equiv1\pmod b,
\]
so \(m\), and hence \(d\), is relatively prime to \(b\). Therefore
\(d\mid b^i\) forces \(d=1\), a contradiction. Thus the pole coming from the
zero weight is distinct from all poles coming from the weights
\(1,b,\dots,b^{q-2}\).

Next, two adjacent powers of \(b\) cannot give the same pole. If for some
\(0\le i\le q-3\) we had \(\omega^{j b^i}=\omega^{j b^{i+1}}\), then
\[
   m\mid j(b^{i+1}-b^i)=j b^i(b-1)=j b^i t.
\]
Equivalently, \(d\mid b^i t\), where \(d=m/\gcd(m,j)\). As above, \(d\) is
relatively prime to \(b\). Also, \(m=t b^{q-2}+1\equiv1\pmod t\), so \(d\) is
relatively prime to \(t\). Hence \(\gcd(d,b^i t)=1\), and \(d\mid b^i t\) again
forces \(d=1\), a contradiction. Thus adjacent powers \(b^i\) and \(b^{i+1}\)
cannot contribute to the same pole.

It follows that any collection of indices among \(0,1,\dots,q-2\) that
contribute to a single common pole contains no two adjacent indices. Therefore
its size is at most \(\lceil(q-1)/2\rceil\). Since the zero weight contributes to
a pole distinct from all these nonzero weights, every pole of \(G_j(z)\) has
multiplicity at most \(\lceil(q-1)/2\rceil\).

By partial fractions, \(G_j(z)\) is a finite sum of terms of the form
\[
   \frac{c}{(1-\lambda z)^s},
\]
where \(|\lambda|=1\) and \(1\le s\le\lceil(q-1)/2\rceil\). The coefficient of
\(z^n\) in such a term is \(c\lambda^n\binom{n+s-1}{s-1}\). Since
\(|\lambda|=1\), the factor \(\lambda^n\) does not affect the order of growth.
Hence every such term contributes
\[
   O_{q,t}\!\left(n^{\left\lceil\frac{q-1}{2}\right\rceil-1}\right).
\]
Thus, for every \(j\ne0\),
\[
   [z^n]G_j(z)=O_{q,t}\!\left(n^{\left\lceil\frac{q-1}{2}\right\rceil-1}\right).
\]
There are only \(m-1\) nonzero values of \(j\), and \(m\) is fixed when \(q\)
and \(t\) are fixed. Combining this with the \(j=0\) contribution gives
\[
   |\cC_q(a)|
   =
   \frac{1}{m}\binom{n+q-1}{q-1}
   +
   O_{q,t}\!\left(n^{\left\lceil\frac{q-1}{2}\right\rceil-1}\right).
\]
Finally, since \(|\cS_{n,q}|=\binom{n+q-1}{q-1}=\Theta_q(n^{q-1})\), taking
logarithms base \(q\) gives
\[
   R(\cC_q(a))
   =
   \log_q m
   +
   O_{q,t}\!\left(n^{-q+\left\lceil\frac{q-1}{2}\right\rceil}\right).
\]
This proves the proposition.
\end{proof}

\section{Exact balance of the sum-modulo construction}
\label{app:sum-modulo-balance}

We record the exact class-size formula for the single-deletion sum-modulo
construction. The proof is the same roots-of-unity filter used in
Proposition~\ref{prop:cyclic-balance}, specialized to the labels
\(0,1,\dots,q-1\) in \(\mathbb Z_q\).

For \(a\in\mathbb Z_q\), let
\[
   N_a^{(q)}(n)
   =
   \left|
   \left\{
      \mathbf x\in\cS_{n,q}:
      \sum_{i=0}^{q-1} i x_i\equiv a\pmod q
   \right\}
   \right|.
\]
Let \(\zeta=\exp(2\pi i/q)\). By the roots-of-unity filter,
\[
   N_a^{(q)}(n)
   =
   \frac1q
   \sum_{j=0}^{q-1}
   \zeta^{-aj}
   [z^n]\prod_{i=0}^{q-1}\frac{1}{1-z\zeta^{ij}}.
\]
The term \(j=0\) contributes
\[
   \frac1q\binom{n+q-1}{q-1}.
\]
For \(j\ne0\), put
\[
   g_j=\gcd(j,q),
   \qquad
   \ell_j=\frac{q}{g_j}.
\]
Then \(\zeta^j\) has order \(\ell_j\), and the multiset
\[
   \{\zeta^{ij}:0\le i\le q-1\}
\]
contains each \(\ell_j\)-th root of unity exactly \(g_j\) times. Hence
\[
   \prod_{i=0}^{q-1}(1-z\zeta^{ij})
   =
   (1-z^{\ell_j})^{g_j}.
\]
Therefore
\[
   [z^n]\prod_{i=0}^{q-1}\frac{1}{1-z\zeta^{ij}}
   =
   \begin{cases}
      \displaystyle
      \binom{n/\ell_j+g_j-1}{g_j-1},
         & \ell_j\mid n,\\[6pt]
      0, & \ell_j\nmid n.
   \end{cases}
\]
Thus
\[
   N_a^{(q)}(n)
   =
   \frac1q\binom{n+q-1}{q-1}
   +
   \frac1q
   \sum_{j=1}^{q-1}
   \zeta^{-aj}
   \mathbf 1_{\ell_j\mid n}
   \binom{n/\ell_j+g_j-1}{g_j-1}.
\]

Let \(d(q)\) be the largest proper divisor of \(q\). For \(j\ne0\), we have
\(g_j\le d(q)\). Hence the nonzero terms contribute at most
\(O_q(n^{d(q)-1})\), proving the asymptotic balance estimate in
Proposition~\ref{prop:sum-modulo-balance}.

If \(q\) is prime, then \(g_j=1\) and \(\ell_j=q\) for every
\(j=1,\dots,q-1\). Hence
\[
   N_a^{(q)}(n)
   =
   \frac1q\binom{n+q-1}{q-1}
   +
   \frac{\mathbf 1_{q\mid n}}{q}
   \sum_{j=1}^{q-1}\zeta^{-aj}.
\]
If \(a=0\), the sum is \(q-1\). If \(a\ne0\), the sum is \(-1\). Therefore
\[
   N_a^{(q)}(n)
   =
   \begin{cases}
      \displaystyle
      \frac1q\binom{n+q-1}{q-1},
         & q\nmid n,\\[8pt]
      \displaystyle
      \frac1q\left(\binom{n+q-1}{q-1}+q-1\right),
         & q\mid n,\ a=0,\\[8pt]
      \displaystyle
      \frac1q\left(\binom{n+q-1}{q-1}-1\right),
         & q\mid n,\ a\ne0.
   \end{cases}
\]
Consequently,
\[
   \max_{a\in\mathbb F_q}N_a^{(q)}(n)
   =
   \left\lceil
      \frac1q\binom{n+q-1}{q-1}
   \right\rceil.
\]
\section{Proof of the support-shadow bound}
\label{app:support-shadow-single}

\begin{proof}[Proof of Proposition~\ref{prop:support-shadow-single}]
Let \(\cC\subseteq\cS_{n,q}\) be a single-deletion-correcting code. For each
\(S\in\cC\), define its one-deletion shadow by
\[
   \partial_1(S)=\{S\setminus\{x\}:x\in\supp(S)\}.
\]
The set \(\partial_1(S)\) consists of all distinct multisets that can be
obtained from \(S\) by deleting one symbol. Hence
\[
   |\partial_1(S)|=|\supp(S)|.
\]

If \(S,T\in\cC\) are distinct, then
\[
   \partial_1(S)\cap\partial_1(T)=\emptyset.
\]
Indeed, if some \(R\in\cS_{n-1,q}\) belonged to both shadows, then \(R\) could
be obtained from both \(S\) and \(T\) by one deletion, contradicting unique
decoding. Therefore
\[
   \sum_{S\in\cC}|\partial_1(S)|
   \le
   |\cS_{n-1,q}|
   =
   \binom{n+q-2}{q-1}
   =
   B.
\]

For \(1\le r\le q\), let \(c_r\) be the number of codewords in \(\cC\) with
support size \(r\). Then the previous inequality gives
\[
   \sum_{r=1}^{q} r c_r\le B.
\]
Moreover, the number of multisets in \(\cS_{n,q}\) with support size exactly
\(r\) is
\[
   A_r=\binom qr\binom{n-1}{r-1}.
\]
Indeed, first choose the \(r\) symbols in the support, and then distribute
\(n\) positive multiplicities among them. Hence
\[
   0\le c_r\le A_r.
\]

Thus the size of any single-deletion-correcting code satisfies
\[
   |\cC|=\sum_{r=1}^{q}c_r,
\]
where the integers \(c_r\) obey
\[
   \sum_{r=1}^{q} r c_r\le B,
   \qquad
   0\le c_r\le A_r.
\]
This is an elementary knapsack upper bound: each selected codeword contributes
value \(1\), while its cost is its support size \(r\). Therefore the maximum
possible value is obtained by first taking all available words of support
\(1\), then all available words of support \(2\), and so on, until the budget
\(B\) is exhausted.

Let \(s\) be the first integer such that
\[
   \sum_{r=1}^{s} rA_r>B.
\]
Then all support sizes \(1,\dots,s-1\) may be filled completely, and the
remaining budget is
\[
   B-\sum_{r=1}^{s-1}rA_r.
\]
Each additional word of support \(s\) costs \(s\), so at most
\[
   \left\lfloor
      \frac{B-\sum_{r=1}^{s-1}rA_r}{s}
   \right\rfloor
\]
additional codewords can be chosen. Hence
\[
   |\cC|
   \le
   \sum_{r=1}^{s-1}A_r
   +
   \left\lfloor
      \frac{B-\sum_{r=1}^{s-1}rA_r}{s}
   \right\rfloor.
\]
This proves the stated bound. The integer \(s\) exists because
\[
   \sum_{r=1}^{q} rA_r
   =
   \sum_{S\in\cS_{n,q}}|\supp(S)|
   =
   q|\cS_{n-1,q}|
   =
   qB>B,
\]
for \(q\ge2\).
\end{proof}

\begin{example}
For \(q=3\) and \(n=4\), we have
\[
   B=|\cS_{3,3}|=\binom52=10.
\]
Moreover,
\[
   A_1=\binom31\binom30=3.
\]
After taking all support-\(1\) words, the remaining budget is \(10-3=7\).
Words of support \(2\) cost \(2\), so at most
\[
   \left\lfloor\frac72\right\rfloor=3
\]
additional words can be chosen. Hence
\[
   S_3(4,1)\le 3+3=6.
\]

For \(q=5\) and \(n=4\), we have
\[
   B=|\cS_{3,5}|=\binom74=35,
   \qquad
   A_1=\binom51\binom30=5.
\]
After taking all support-\(1\) words, the remaining budget is \(35-5=30\).
Words of support \(2\) cost \(2\), so at most \(15\) additional words can be
chosen. Therefore
\[
   S_5(4,1)\le 5+15=20.
\]
\end{example}

\end{document}